\newcommand{\OPT}{\mbox{\sc OPT}}
\newcommand{\APP}{\mbox{\sc APP}}
\title{On Computing a Center Persistence Diagram}
\titlerunning{Computing a Center Persistence Diagram}
\author{Yuya Higashikawa}
{School of Social Information Science, University of Hyogo, Kobe, Japan}{higashikawa@sis.u-hyogo.ac.jp}{}{}
\author{Naoki Katoh}
{School of Social Information Science, University of Hyogo, Kobe, Japan}{naoki.katoh@gmail.com}{}{}
\author{Guohui Lin}{Department of Computing Science, University of Alberta, Edmonton, Alberta, Canada}{guohui@ualberta.ca}{}{}
\author{Eiji Miyano}
{Department of Artificial Intelligence, Kyushu Institute of Technology, Iizuka, Japan}{miyano@ces.kyutech.ac.jp}{}{}
\author{Suguru Tamaki}
{School of Social Information Science, University of Hyogo, Kobe, Japan}{tamak@sis.u-hyogo.ac.jp}{}{}
\author{Junichi Teruyama}
{School of Social Information Science, University of Hyogo, Kobe, Japan}{junichi.teruyama@sis.u-hyogo.ac.jp}{}{}
\author{Binhai Zhu}{Gianforte School of Computing, Montana State University, Bozeman, MT 59717, USA}{bhz@montana.edu}{}{}
\authorrunning{Higashikawa, et al.}
\keywords{Persistence diagrams, Bottleneck distance, Center persistence diagram, NP-hardness, Approximation algorithms}
\begin{document}
\maketitle

\begin{abstract}
Throughout this paper, a persistence diagram ${\cal P}$ is composed of a 
set $P$ of planar points (each corresponding to a topological feature) above
the line $Y=X$, as well as the line $Y=X$ itself,
i.e., ${\cal P}=P\cup\{(x,y)|y=x\}$. Given a set of persistence diagrams
${\cal P}_1,...,{\cal P}_m$, for the data reduction purpose, one way to
summarize their topological features is to compute the {\em center} ${\cal C}$
of them first under the bottleneck distance. Here we mainly focus
on the two discrete versions when points in
${\cal C}$ could be selected with or without replacement from $P_i$'s. (We will
briefly discuss the continuous case, i.e., points in ${\cal C}$ are arbitrary,
which turns out to be closely related to the 3-dimensional geometric assignment
problem).
For technical reasons, we first focus on the
case when $|P_i|$'s are all the same (i.e., all have the same size $n$), and
the problem is to compute a center point set $C$ under the bottleneck matching
distance. We show, by a non-trivial reduction from the Planar 3D-Matching
problem, that this problem is NP-hard even when $m=3$ diagrams are given. This implies that the
general center problem for persistence diagrams under the bottleneck distance, when $P_i$'s possibly have different
sizes, is also NP-hard when $m\geq 3$. On the positive side, we show that this
problem is polynomially solvable when $m=2$ and admits a factor-2 approximation
for $m\geq 3$. These positive results hold for any $L_p$ metric when $P_i$'s
are point sets of the same size, and also hold for the case when $P_i$'s have
different sizes in the $L_\infty$ metric (i.e., for the Center Persistence
Diagram problem). This is the best possible in polynomial time for the
Center Persistence Diagram under the bottleneck distance unless P = NP.
All these results hold for both of the discrete versions as well as the continuous version; in fact, the
NP-hardness and approximation results also hold under the Wasserstein distance
for the continuous version.
\end{abstract}

\section{Introduction}

Computational topology has found a lot of applications in recent years
\cite{EH10}. Among them, persistence diagrams, each being a set of
(topological feature) points above and inclusive of the line $Y=X$ in the X-Y plane, have also
found various applications, for instance in GIS \cite{Ahmed14},
in neural science \cite{Giusti15}, in wireless networks \cite{Le15}, and
in prostate cancer research \cite{Lawson19}.
(Such a topological feature point $(b,d)$ in a persistence diagram, which we
will simply call a point henceforth, indicates a topological
feature which appears at time $b$ and disappears at time $d$. Hence $b\leq d$.
In the next section, we will present some technical details.)
A consequence is that practitioners gradually have a database of persistence
diagrams when processing the input data over certain period of time. It is not uncommon these days that such a database has tens of thousands
of persistence diagrams, each with up to several thousands of points.
How to process and search these diagrams becomes a new challenge for algorithm
designers, especially because the bottleneck distance is typically used to measure
the similarity between two persistence diagrams.

In \cite{Fasy18} the following problem was studied: given a set of persistence
diagrams ${\cal P}_1,...,{\cal P}_m$, each with size at most $n$, how to
preprocess them so that each has a key $k_i$ for $i=1,...,m$ and for a query
persistence diagram ${\cal Q}$ with key $k$, an approximate nearest persistence
diagram ${\cal P}_j$ can be returned by searching the key $k$ in the 
data structure for $k_i$'s. A hierarchical data structure was built and
the keys are basically constructed using snap roundings on a grid with
different resolutions. There is a trade-off between the space complexity (i.e., number
of keys stored) and the query time. For instance, if one wants an efficient
(polylogarithmic) query time, then he/she has to use an exponential
space; and with a linear or polynomial space, then he/she needs to 
spend an exponential query time. Different from traditional problems of searching
similar point sets \cite{HS94}, one of the main technical difficulties is to
handle points near the line $Y=X$.

In prostate cancer research, one important part is to determine how the cancer
progresses over certain period of time. In \cite{Lawson19}, a method is to
use a persistence diagram for each of the histopathology images (taken over
certain period of time). Naturally, for the data collected over some time
interval, one could consider packing a corresponding set of persistence
diagrams with a center, which could be considered as a {\em median}
persistence diagram summarizing these persistence diagrams. A sequence of
such centers over a longer time period would give a rough estimate on how the
prostate cancer progresses. This motivates our research. On the other hand,
while the traditional center concept (and the corresponding algorithms) has
been used for planar point sets (under the Euclidean distance) \cite{MIH81}
and on binary strings (under the Hamming distance) \cite{LMW02}; recently we
have also seen its applications in more complex objects, like polygonal chains
(under the discrete Frechet distance) \cite{Indyk02,Buchin19}. In this sense,
this paper is also along this line.

Formally, in this paper we consider a way to pack persistence diagrams. Namely,
given a set of persistence diagrams ${\cal P}_1,...,{\cal P}_m$, how to compute
a {\em center} persistence diagram? Here the distance measure used is the
traditional bottleneck distance and Wasserstein distance (where we first focus
on the former). We first describe the case when all ${\cal P}_i$'s have the
same size $n$, and later we show how to withdraw this constraint (by slightly
increasing the running time of the algorithms). It turns out that the
{\em continuous} case, i.e., when the points in the center can be arbitrary,
is very similar to the geometric 3-dimensional assignment problem: Given three
points sets $P_i$ of the same size $n$ and each colored with {\em color}-$i$
for $i=1..3$, divide points in $P_i$'s into $n$ 3-clusters (or triangles) such
that points in each cluster or triangle have different colors, and some
geometric quantity (like the maximum area or perimeter of these triangles) is
minimized \cite{Spieksma96,Goossens10}. For our application, we need to
investigate discrete versions where points in the center persistence diagram
must come from the input diagrams (might be from more than one diagrams). We
show that the problem is NP-hard even when $m=3$ diagrams are given. On the
other hand, we show that the problem is polynomially solvable when $m=2$ and
the problem admits a 2-approximation for $m\geq 3$. At the end, we briefly
discuss how to adapt the results to Wasserstein distance for the continuous
case. The following table summarizes the main results in this paper.

\begin{table}[ht]
\centering 
\caption{Results for the Center Persistence Diagram problems under the bottleneck ($d_B)$ and Wasserstein ($W_p$) distances when $m\geq 3$ diagrams are given.}

\begin{tabular}{l c c c }
\hline\hline
 ~~~ & ~~Hardness  &~~Inapproximability bound~ & ~~Approximation factor   \\
\hline
$d_B$, with no replacement  & NP-complete      & $2-\varepsilon$  &  2   \\
$d_B$, with replacement  & NP-complete     & $2-\varepsilon$  &  2   \\
$d_B$, continuous  & NP-hard      & $2-\varepsilon$  &  2   \\
 \hline
$W_p$, with no replacement  & ?      & ?  &  2   \\
$W_p$, with replacement  & ?      & ?  &  2   \\
$W_p$, continuous  & NP-hard      & ?  &  2   \\
 \hline
\end{tabular}
\end{table}

This paper is organized as follows. In Section 2, we give some necessary
definitions and we also show, as a warm-up, that the case when $m=2$
is polynomially solvable. In Section 3, we prove that the Center Persistence
Diagram problem under the bottleneck distance is
NP-hard when $m=3$ via a non-trivial reduction from the Planar
three-dimensional Matching (Planar 3DM) problem. In Section 4, we present
the factor-2 approximation algorithm for the problem (when $m\geq 3$).
In Section 5, we briefly discuss how to modify the proofs to the continuous
Center Persistence Diagram under the Wasserstein distance.
In Section 6, we conclude the paper with some open questions.

\section{Preliminaries}

We assume that the readers are familiar with standard terms in algorithms,
like approximation algorithms \cite{CLRS01}, and  NP-completeness \cite{GJ79}.

\subsection{Persistence Diagram}

Homology is a machinery from algebraic topology which gives the ability to count the number of holes in
a $k$-dimensional simplicial complex. For instance, let $X$ be a simplicial 
complex, and let the corresponding $k$-dimensional homology be $H_k(X)$, then
the dimension of $H_0(X)$ is the number of path connected
components of $X$ and $H_1(X)$ consists of loops in $X$, each is a `hole' in
$X$. It is clear that these numbers are invariant under rigid motions (and
almost invariant under small numerical perturbations) on the original data,
which
is important in many applications. For further details on classical homology
theory, the readers are referred to \cite{TH01}, and to \cite{EH10,MKM04} for
additional information on computational homology. It is well known that the
$k$-dimensional homology of $X$ can be computed in polynomial
time \cite{EH10,ELZ02,MKM04}.

Ignoring the details for topology, the central part of persistent homology
is to track the birth and death of the topological features when computing
$H_k(X)$. These features give a {\em persistence diagram} (containing the
birth and death times of features as pairs $(b,d)$ in the extended plane).
See Figure 1 for an example. Note that as a convention, the line $Y=X$
is included in each persistence diagram, where points on the
line $Y=X$ provide infinite multiplicity, i.e., a point $(t,t)$ on it could
be considered as a dummy feature which is born at time $t$ then immediately
dies. Formally, a persistence diagram ${\cal P}$ is composed of a 
set $P$ of planar points (each corresponding to a topological feature) above
the line $Y=X$, as well as the line $Y=X$ itself,
i.e., ${\cal P}=P\cup\{(x,y)|y=x\}$. Due to the infinite multiplicity on $Y=X$,
there is always a bijection between ${\cal P}_i$ and
${\cal P}_j$, even if $P_i$ and $P_j$ have different sizes.

\begin{figure}[htbp]
%\psfrag{x1}{$x_1$}
%\psfrag{x2}{$x_2$}
%\psfrag{x3}{$x_3$}
\begin{center}
\includegraphics[bb=-80 10 400 230,totalheight=5.0cm]{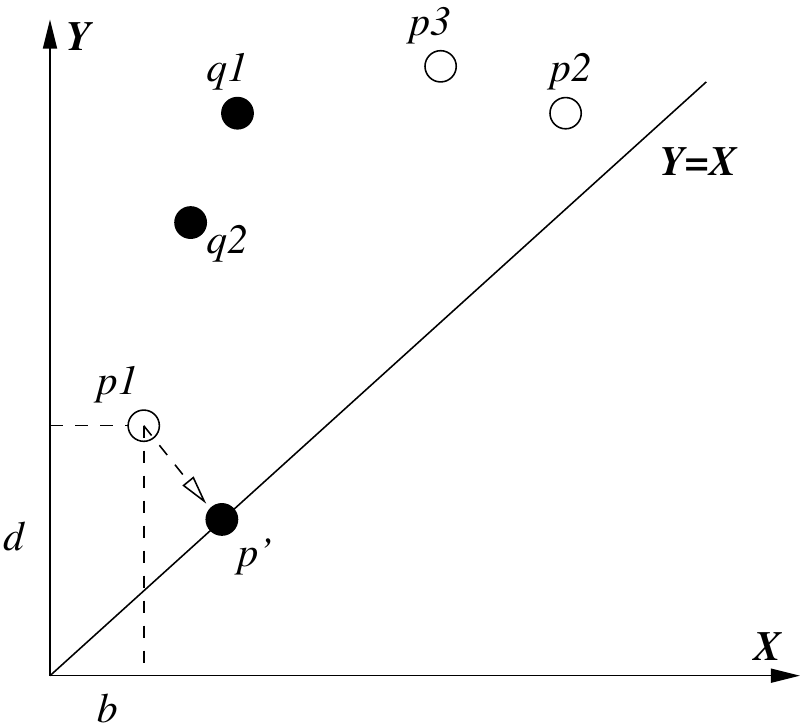}
\end{center}
\label{fig0}
\caption{\bf Two persistence diagrams ${\cal P}$ and ${\cal Q}$, with feature point sets $P=\{p_1,p_2,p_3\}$ and $Q=\{q_1,q_2\}$ respectively. A point $p_1=(b,d)$ means that it is born at time $b$ and it dies at time $d$. The projection of $p_1$ on $Y=X$ gives $p'$.}
\end{figure}
Given two persistent diagrams ${\cal P}_i$ and ${\cal P}_j$, each with $O(n)$ points,
the {\em bottleneck distance} between them is defined as follows:
$$d_B({\cal P}_i,{\cal P}_j)=\inf_{\phi}\{ \sup_{x\in {\cal P}_i}\|x-\phi(x)\|_{\infty}, \phi:{\cal P}_i\rightarrow {\cal P}_j \mbox{~is~a~bijection}\}.$$
Similarly, the $p$-{\em Wasserstein} distance is defined as
$$W_p({\cal P}_i,{\cal P}_j)=\left( \inf_{\phi}\sum_{x\in {\cal P}_i}\|x-\phi(x)\|^p_{\infty}\right)^{1/p}, \phi:{\cal P}_i\rightarrow {\cal P}_j \mbox{~is~a~bijection}.$$
We refer the readers to \cite{EH10} for further information regarding persistence diagrams. Our extended results regarding Wasserstein distance will be
discussed solely in Section 5, and until then we focus only on the bottleneck
distance.
 
For point sets $P_1,P_2$ of the same size, we will also use $d^p_B(P_1,P_2)$ to
represent their bottleneck matching distance, i.e., let $\beta$ be a bijection
between $P_1$ and $P_2$,
$$d^p_B(P_1,P_2)=\min_{\beta}\max_{a\in P_1} d_p(a,\beta(a)).$$ 
Here, $d_p(-)$ is the distance under the $L_p$ metric. As we mainly cover the
case $p=2$, we will use $d_B(P_i,P_j)$ instead of $d^2_B(P_i,P_j)$ henceforth.
Note that in comparing persistence diagrams, the $L_\infty$ metric is always
used. For our hardness constructions, all the valid clusters form either
horizontal or vertical segments, hence the distances under $L_2$ and $L_\infty$
metrics are all equal in our constructions.

While the bottleneck distance between two persistence diagrams is continuous
in its original form, it was shown that it can be computed using
a discrete method \cite{EH10}, i.e., the traditional geometric bottleneck
matching \cite{Alon01}, in $O(n^{1.5}\log n)$ time.
In fact, it was shown that the multiplicity property of the line $Y=X$ can be
used to compute the bottleneck matching between two
diagrams ${\cal P}_1$ and ${\cal P}_2$ more conveniently --- regardless of
their sizes \cite{EH10}. This can be done as follows. Let $P_i$ be the set
of feature points in ${\cal P}_i$. Then project points in $P_i$ perpendicularly
on $Y=X$ to have $P'_i$ respectively, for $i=1,2$. (See also Figure 1.) It
was shown that the bottleneck distance between two diagrams ${\cal P}_1$ and
${\cal P}_2$  is exactly equal to the bottleneck (bipartite) matching distance, in the
$L_\infty$ metric, between $P_1\cup P'_2$ and $P_2\cup P'_1$. Here the weight
or cost of an edge $c(u,v)$, with $u\in P'_2$ and $v\in P'_1$, is set to zero;
while $c(u,v)=\|u-v\|_\infty$, if $u\in P_1$ or $v\in P_2$. The $p$-Wasserstein
distance can be computed similarly, using a min-sum bipartite matching between
$P_1\cup P'_2$ and $P_2\cup P'_1$, with all edge costs raised to $c^p$.
(Kerber, et al. showed that several steps of the bottleneck matching
algorithm can be further simplified \cite{Kerber16}.)
Later, we will extend this construction for more than two diagrams.

\subsection{Problem Definition}

Throughout this paper, for two points $p_1=(x_1,y_1)$ and
$p_2=(x_2,y_2)$, we use $d_p(p_1,p_2)$ to represent the $L_p$
distance between $p_1$ and $p_2$, which is
$d_p(p_1,p_2)=(|x_1-x_2|^p+|y_1-y_2|^p)^{1/p},$ for $p<\infty$.
When $p=\infty$, $d_\infty(p_1,p_2)=\|p_1-p_2\|_\infty=\max\{|x_1-x_2|,|y_1-y_2|\}$.
We will mainly focus on $L_2$ and $L_\infty$ metrics, for the former,
we simplify it as $d(p_1,p_2)$.

\begin{definition}\textbf{\emph{The Center Persistence Diagram Problem under the Bottleneck Distance (CPD-B)}}

{\bf Instance}: A set of $m$ persistence diagrams ${\cal P}_1,...,{\cal P}_m$
with the corresponding feature point sets $P_1,...,P_m$ respectively, and a
real value $r$.

{\bf Question}: Is there a persistence diagram ${\cal Q}$ such that $\max_i d_B({\cal Q},{\cal P}_i)\leq r$?
\end{definition}

Note that we could have three versions, depending on ${\cal Q}$. We mainly
focus on the discrete version when the points in ${\cal Q}$ are selected with
no replacement from the multiset $\cup_{i=1..m} P_i$. It turns out that the
other discrete version, i.e., the points in ${\cal Q}$ are selected with
replacement from the set $\cup_{i=1..m} P_i$, is different from the first
version but all the results can be carried over with some simple twist. 
We will briefly cover the {\em continuous} case, i.e., when points ${\cal Q}$
are arbitrary; as we covered earlier in the introduction, when $m=3$, this
version is very similar to the geometric three-dimensional assignment
problem \cite{Spieksma96,Goossens10}. 

We will firstly consider two simplified versions of the corresponding problem.

\begin{definition}\textbf{\emph{The $m$-Bottleneck Matching Without Replacement Problem}}

{\bf Instance}: A set of $m$ planar point sets $P_1,...,P_m$ such that
$|P_1|=\cdots=|P_m|=n$, and a real value $r$.

{\bf Question}: Is there a point set $Q$, with $|Q|=n$, such that any
$q\in Q$ is selected from the multiset $\cup_i P_i$ with
no replacement and $\max_i d_B(Q,P_i)\leq r$?
\end{definition}
  
\begin{definition}\textbf{\emph{The $m$-Bottleneck Matching With Replacement Problem}}

{\bf Instance}: A set of $m$ planar point sets $P_1,...,P_m$ such that
$|P_1|=\cdots=|P_m|=n$, and a real value $r$.

{\bf Question}: Is there a point set $Q$, with $|Q|=n$, such that any
$q\in Q$ is selected from the set $\cup_i P_i$ with
replacement and $\max_i d_B(Q,P_i)\leq r$?
\end{definition}
  
\begin{figure}[htbp]
%\psfrag{x1}{$x_1$}
%\psfrag{x2}{$x_2$}
%\psfrag{x3}{$x_3$}
\begin{center}
\includegraphics[bb=-160 20 380 160,totalheight=3.8cm]{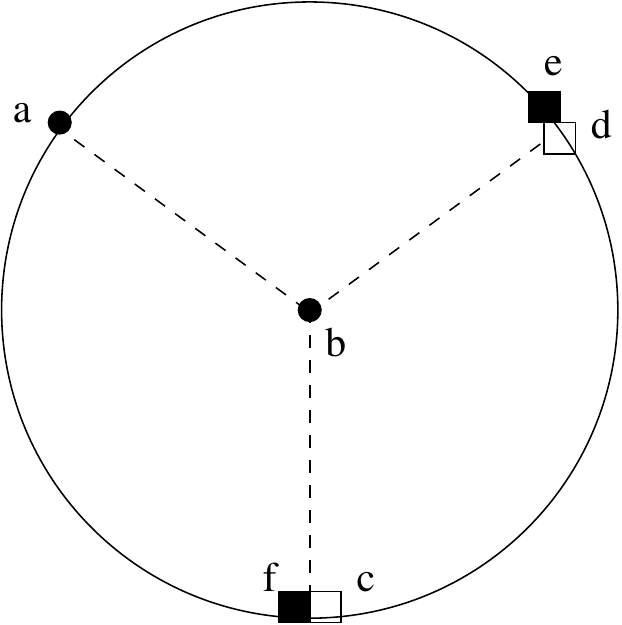}
\end{center}
\label{fig0}
\caption{\bf An example with $P_1=\{a,b\}, P_2=\{c,d\}$ and $P_3=\{e,f\}$,
with all the points (except $b$) on a unit circle and $b$ being the center
of the circle. For the `without replacement' version, the optimal solution is
$Q_1=\{b,c\}$, where $b$ covers the 3-cluster $\{a,d,e\}$, $c$ covers
the 3-cluster $\{b,c,f\}$ and the optimal covering radius is 1. 
For the `with replacement' version, the optimal solution could be the same, but
could also be $\{b,b\}$.}
\end{figure}
It turns out that these two problems are really to find center points in $Q$ to
cover $m$-clusters with an optimal covering radius $r$, with each cluster being
composed of $m$ points, one each from $P_i$. For $m=3$, this is similar to the
geometric three-dimensional assignment problem which aims at finding
$m$-clusters with certain criteria \cite{Spieksma96,Goossens10}. 
However, the two versions of the problem are slightly different 
from the geometric three-dimensional assignment problem.
The main difference is that in these discrete versions a cluster could
be covered by a center point which does not belong to the cluster. See
Figure 2 for an example.
Also, note that the two versions themselves are slightly different; in fact,
their solution values could differ by a factor of 2 (see Figure 3).

Note that we could define a continuous version in which the condition on $q$
is withdrawn and this will be briefly covered at the end of each section.
In fact, we focus more on the optimization versions of these problems.
We will show that 3-Bottleneck Matching, for both the discrete versions,
is NP-hard, immediately implying CPD-B is NP-hard
for $m\geq 3$. We then present a 2-approximation for the $m$-Bottleneck
Matching Problem and later we will
show how to make some simple generalization so the `equal size' condition
can be withdrawn for persistence diagrams --- this implies that CPD-B also
admits a 2-approximation for $m\geq 3$. We will focus on the `without
replacement' version in our writing, and later we will show how to
generalize it to the `with replacement' version at the end of each section.
Henceforth, we will refer to the 'without replacement' version simply as
$m$-Bottleneck Matching unless otherwise specified.

At first, we briefly go over a polynomial time solution for the case when $m=2$.

\begin{figure}[htbp]
%\psfrag{x1}{$x_1$}
%\psfrag{x2}{$x_2$}
%\psfrag{x3}{$x_3$}
\begin{center}
\includegraphics[bb=150 240 340 300,totalheight=1.7cm]{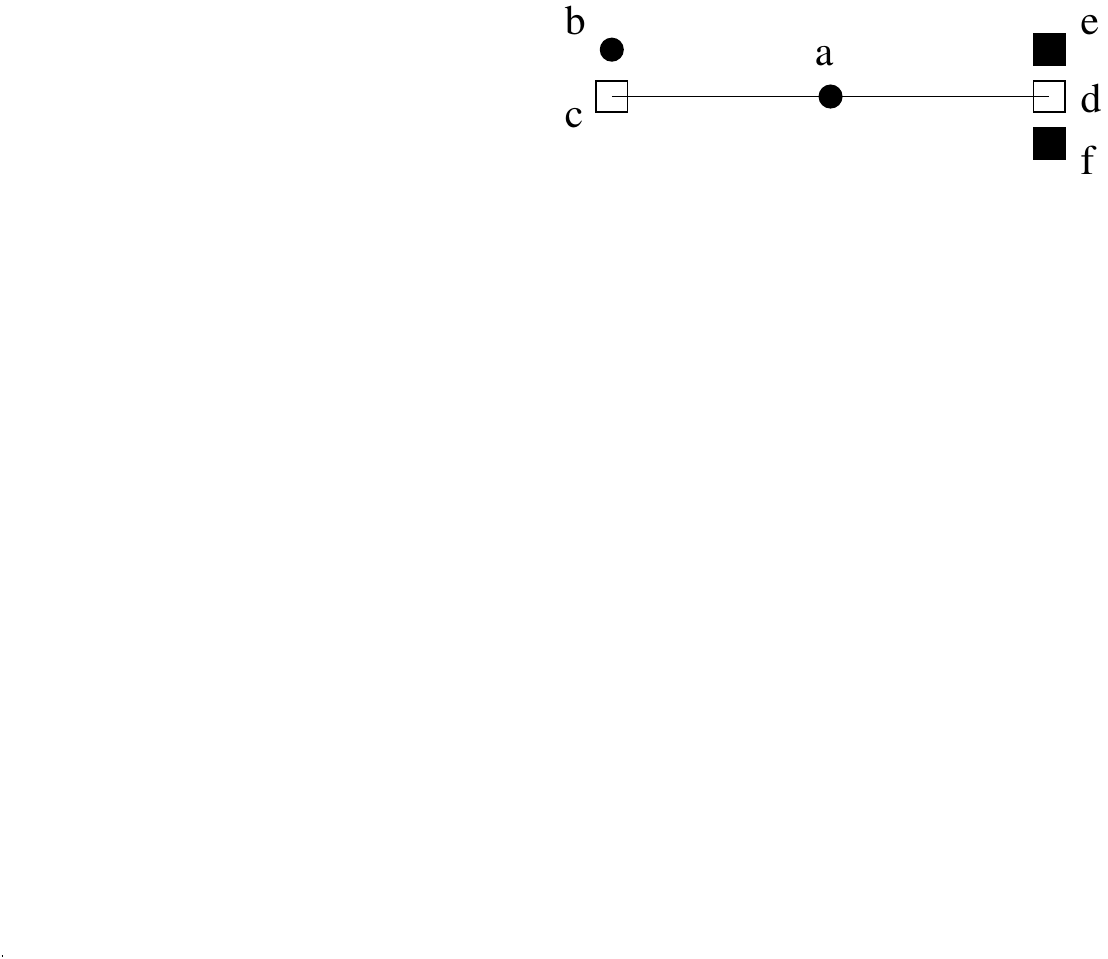}
\end{center}
\label{fig0}
\caption{\bf An example with $P_1=\{a,b\}, P_2=\{c,d\}$ and $P_3=\{e,f\}$,
with all the points on a unit line segment and $a$ being the midpoint of the segment. For the `without replacement' version, the optimal solution is
$Q_1=\{a,b\}$, where $a$ covers the 3-cluster $\{a,c,f\}$, $b$ covers
the 3-cluster $\{b,d,e\}$ and the optimal covering radius is 1. For the
`with replacement' version, the optimal solution is $Q_2=\{a,a\}$, with the
same clusters $\{a,c,f\}$ and $\{b,d,e\}$, and the optimal covering radius
being 1/2.}
\end{figure}

\subsection{A Warm-up for $m=2$}

First, recall that $|P_1|=|P_2|=n$.
Note that the optimal solution must be the distance between $p\in P_1$ and
$q\in P_2$. We first consider the decision version of the problem; namely,
given a radius $r$, find a clustering of 2-points $\{p_{1i},p_{2j}\}$, with
$p_{xy}\in P_x (x=1,2)$, such that the distance between each of them and $\hat{q}$, where
$\hat{q}$ is selected with no replacement from the multiset $P_1\cup P_2$, is at most $r$.
(Note that $\hat{q}$ is not necessarily equal to $p_{1i}$ or $p_{2j}$.)
Once having this decision procedure, we could use a binary search to compute the
smallest radius such a clustering exists.

Given the radius $r$, we construct a flow network $G = (V, E)$ as follows:
besides the source $s$ and the sink $t$, there are four layers of nodes.
The first layer contains $n$ nodes corresponding to the $n$ points of $P_1$,
the second and the third layers both contain $2n$ nodes corresponding to the $2n$ points of $P_1 \cup P_2$,
and the fourth layer contains $n$ nodes corresponding to the $n$ points of $P_2$.
Each node $p_{1i}$ in the first layer has a link going out to a node $p_{\ell j}$ in the second layer if their distance is at most $r$;
for example, $p_{1i}$ in the first layer surely has a link going out to $p_{1i}$ in the second layer, due to their distance being $0$.
Each node $p_{\ell i}$ in the second layer has only one out-going link to the node $p_{\ell i}$ in the third layer.
Each node $p_{\ell i}$ in the third layer has a link going out to a node $p_{2 j}$ in the fourth layer if their distance is at most $r$.
Lastly, the source $s$ has a link going out to every node in the first layer,
and every node in the fourth layer has a link going out to the sink $t$.
All the links in the constructed network has unit capacity.

One sees that the path $s$-$p_{1h}$-$p_{\ell i}$-$p_{\ell i}$-$p_{2 j}$ is used to send a unit of flow if and only if
1) $p_{\ell i}$ is selected into the set $Q$, and
2) $p_{\ell i}$ is matched with $p_{1h}$ ($p_{2j}$, respectively) in the bottleneck matching between $Q$ and $P_1$ ($P_2$, respectively).
Therefore, the maximum flow has a value $n$ if and only if $Q$ is determined such that the maximum bottleneck matching distance is at most $r$.
Since the constructed network is acyclic, its maximum flow can be computed in $O(n^3)$ time \cite{MKM78}. A binary search to find the optimal radius leads to
a solution running in $O(n^3\log n)$ time.

For the `With Replacement' version, given a radius $r$, an unweighted
bipartite graph between $P_1$ and $P_2$ can be first constructed. In the graph
there is an edge between $u\in P_1$ and $v\in P_2$ if there is a point
$w\in P_1\cup P_2$ such that a circle with radius $r$ centered $w$ can cover
both $u$ and $v$. Then, the decision problem is to decide whether a perfect
matching exists, which can be solved in $O(n^{2.5})$ time \cite{HK73}. A
binary search for the optimal radius gives a solution which runs in
$O(n^{2.5}\log n)$ time.

For the continuous version, the problem can be solved by computing the
geometric bottleneck matching between $P_1$ and $P_2$ in $O(n^{1.5}\log n)$
time \cite{Alon01}. After the matched edges between $P_1$ and $P_2$ are
identified, the set of midpoints of ll the edges in the matching gives
us the solution.

The above algorithm can be generalized for two persistence diagrams, with
the distance between two points in the $L_\infty$ metric.
In general, the sizes of two persistence diagrams might not be the same, i.e.,
$|P_1|$ might not be the same as $|P_2|$. This can be handled easily using
the projection method in \cite{EH10}, which has been described in
subsection 2.1 and will be generalized for $m\geq 3$
in Section 4. Hence, we have the following theorem.

\begin{theorem}
%\label{thm01}
The Center Persistence Diagram Problem can be solved in polynomial time, for
$m=2$ and for all the three versions (`Without Replacement', `With Replacement'
and continuous versions).
\end{theorem}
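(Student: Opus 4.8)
The plan is to prove the theorem by treating the three versions separately and, for each, reducing the decision problem (``does a center diagram of bottleneck radius at most $r$ exist?'') to a polynomially solvable combinatorial problem, then recovering the optimum by binary search over a polynomial-size set of candidate radii. The observation that makes the binary search valid is that in every version the optimal covering radius is pinned down by a few input points: for the two discrete versions it equals $\max(\|w-u\|,\|w-v\|)$ for some candidate center $w$ and covered points $u,v$ drawn from $\cup_i P_i$, so the optimum lies among the $O(n^2)$ values $\|w-x\|$ with $w,x\in P_1\cup P_2$; for the continuous version it is exactly half the geometric bottleneck matching distance between $P_1$ and $P_2$. Sorting the candidates and binary searching adds only a logarithmic factor.

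For the ``without replacement'' version I would prove correctness of the four-layer flow network: a feasible $Q$ together with its bottleneck matchings to $P_1$ and to $P_2$ is in bijection with an integral $s$--$t$ flow of value $n$. A unit of flow on the path $s$-$p_{1h}$-$p_{\ell i}$-$p_{\ell i}$-$p_{2j}$ certifies that $p_{\ell i}\in Q$ covers the $2$-cluster $\{p_{1h},p_{2j}\}$ within radius $r$; saturating all $n$ layer-$1$ and all $n$ layer-$4$ nodes encodes that every point of $P_1$ and of $P_2$ is matched, while the mandatory unit-capacity edge $p_{\ell i}\to p_{\ell i}$ between the second and third layers enforces that each element of the multiset $P_1\cup P_2$ is used as a center at most once, which is precisely the no-replacement condition. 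Acyclicity then yields an $O(n^3)$ maximum flow. For the ``with replacement'' version the multiplicity constraint disappears, so instead I would build the unweighted bipartite graph on $P_1\cup P_2$ in which $u\in P_1$ and $v\in P_2$ are adjacent iff some $w\in P_1\cup P_2$ covers both within radius $r$, and test for a perfect matching; feasibility is equivalent to the existence of such a matching. The continuous version needs no decision oracle: compute the bottleneck matching between $P_1$ and $P_2$ and place a center at the midpoint of each matched edge, whose covering radius is half the matching's bottleneck value, which is optimal since the minimum enclosing circle of two points is centered at their midpoint.

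Finally I would discharge the equal-size assumption and pass from point sets to persistence diagrams via the projection construction of \cite{EH10} recalled in Subsection 2.1: project each $P_i$ perpendicularly onto $Y=X$ to obtain $P_i'$, run the above constructions on the augmented sets $P_1\cup P_2'$ and $P_2\cup P_1'$ in the $L_\infty$ metric, and set the cost of every edge between two projected points to zero. Since the bottleneck diagram distance equals the bottleneck bipartite matching distance on these augmented sets, all three constructions carry over verbatim with $n$ replaced by $O(n)$, giving the claimed polynomial bounds.

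The step I expect to be the main obstacle is the correctness of the flow reduction for the ``without replacement'' version, in particular verifying that the mandatory middle edge faithfully models the no-replacement multiplicity restriction and, in the completeness direction, that an integral maximum flow of value strictly below $n$ genuinely precludes every feasible center diagram rather than only those respecting the layer structure. Once this equivalence is established, the matching reductions and the candidate-radius analysis are routine.
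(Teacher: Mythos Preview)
Your proposal is correct and follows essentially the same approach as the paper: the four-layer unit-capacity flow network for the `without replacement' version, the bipartite perfect-matching test for the `with replacement' version, the bottleneck matching plus midpoints for the continuous version, and the projection trick of \cite{EH10} to lift everything to persistence diagrams of unequal size. Your treatment is in fact slightly more careful than the paper's in two places---you explicitly enumerate the $O(n^2)$ candidate radii for the binary search (the paper only asserts the optimum is an inter-point distance) and you justify optimality of the midpoint centers in the continuous case via the triangle inequality---but these are refinements of the same argument rather than a different route.
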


In the next section, we will consider the case for $m=3$.

\section{3-Bottleneck Matching is NP-complete}

We will first focus on the $L_2$ metric in this section and at the
end of the proof it should be seen that the proof also works for the
$L_\infty$ metric. For $m=3$, we can color points in $P_1,P_2$ and
$P_3$ in color-1, color-2 and color-3. Then, in this case, the problem is really to
find $n$ disks centered at $n$ points from
$P_1\cup P_2\cup P_3$, with smallest radii $r^{*}_i$ ($i=1..n$) respectively, such that each disk contains exact 3 points of
different colors (possibly including the center of the disk); moreover, $\max_{i=1..n}r^{*}_i$ is bounded from above by a given value $r$.
We also say that these 3 points form a {\em cluster}.

It is easily seen that (the decision version of) 3-Bottleneck Matching is in NP. Once the $n$ guessed
disks are given, the problem is then a max-flow problem, which can be
verified in polynomial time.

We next show that Planar 3-D Matching (Planar 3DM) can be reduced to
3-Bottleneck Matching in polynomial time. The former is a known NP-complete
problem \cite{DF86}. In 3DM, we are given three sets of elements
$E_1,E_2,E_3$ (with $|E_1|=|E_2|=|E_3|=\gamma$) and a set ${\cal T}$ of $n$ triples,
where $T\in {\cal T}$ implies that $T=(a_1,a_2,a_3)$ with $a_i\in E_i$. The problem
is to decide whether there is a set $S$ of $\gamma$ triples such that each element
in $E_i$ appears exactly once in (the triples of) $S$. The Planar 3DM incurs 
an additional constraint: if we embed elements and triples as points on the
plane such that there is an edge between an element $a$ and a triple $T$ iff
$a$ appears in $T$, then the resulting graph is planar.

An example for Planar 3DM is as follows: $E_1=\{1,2\}, E_2=\{a,b\}, E_3=\{x,y\}$, and ${\cal T}=\{(1,a,x),(2,b,x),(2,b,y),(1,b,y)\}$. The solution is
$S=\{(1,a,x),(2,b,y)\}$. 

Given an instance for Planar 3DM and a corresponding planar graph $G$ with $O(n)$ vertices, we
first convert it to a planar graph with degree at most 3. This can be
done by replacing a degree-$d$ element node $x$ in $G$ with a path of $d$ nodes
$x_1,...,x_d$, each with degree at most 3 and the connection between
$x$ and a triple node $T$ is replaced by a connection from $x_i$ to $T$ for some $i$
(see also Figure 4). We have a resulting planar graph $G'=(V(G'),E(G'))$ with degree at
most 3 and with $O(n)$ vertices. Then we construct a rectilinear embedding of $G'$ on a regular rectilinear grid with a unit grid length,
where each vertex $u\in V(G')$ is embedded at a grid point and an edge $(u,v)\in E(G')$
is embedded as an intersection-free path between $u$ and $v$ on the grid.
It is well-known that such an embedding can be computed 
in $O(n^2)$ time \cite{Valiant81}.

Let $x$ be a black node ($\bullet$ in Figure~4) with degree
$d$ in $G$. In the rectilinear embedding of $G'$, the paths from $x_i$ to
$x_{i+1}$ ($i=1,...,d-1$) will be the basis of the element gadget for $x$.
(Henceforth, unless otherwise specified, everything we talk about in this
reduction is referred to the rectilinear embedding of $G'$.)
We put a copy of $\bullet$ at each (grid point) $x_i$ as in Figure~4. (If the
path from $x_i$ to $x_{i+1}$ is of length greater than one, then we put
$\bullet$ at each grid point on the path from $x_i$ to $x_{i+1}$.) 

We now put color-2 and color-3 points ($\square$ and $\blacksquare$) at 1/3
and 2/3 positions at each grid edge which is contained in some path in an
element gadget (in the embedding of $G'$). These points are put in a way such
that it is impossible to use a discrete disk centered at a $\bullet$ point
with radius 1/3 to cover three points with different colors. These patterns
are repeated to reach a \emph{triple gadget}, which will be given later.
Note that this construction is done similarly for elements $y$ and $z$, except
that the grid points in the element gadgets for $y$ and $z$ are of color-2
($\square$) and color-3 ($\blacksquare$) respectively.

\begin{lemma}
In an element gadget for $x$, exactly one $x_i$ is covered by a discrete
disk of radius 1/3, centered at a (colored) grid point out of the gadget.
\end{lemma}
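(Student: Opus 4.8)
The plan is to show that, in any valid clustering of the gadget points --- one where every point is covered by a discrete radius-$1/3$ disk whose cluster contains exactly one point of each color --- the $\square$ and $\blacksquare$ points placed along the element-gadget path pin down the coverage of all but one of the black points, leaving exactly one $x_i$ to be served from outside. First I would fix notation: let $b_1,\dots,b_N$ be the $\bullet$ grid points listed along the path $x_1\cdots x_d$ (the $x_i$ form a subset of the $b_j$, while the remaining $b_j$ are interior path points with no connection leaving the gadget), and let $e_1,\dots,e_{N-1}$ be the unit grid intervals of the path, each carrying one $\square$ and one $\blacksquare$ at its $1/3$ and $2/3$ positions. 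Since $Q$ must match every point of each $P_i$ within $1/3$, every $b_j$ and every colored point is covered by some cluster.

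The first block of the argument is purely geometric. I would verify the two local facts forced by the $1/3$--$2/3$ placement and the (alternating) coloring along the path: (i) the two colored points on an interval $e_j$ are \emph{locked} to one another, i.e.\ neither has any opposite-colored point besides its partner within distance $2/3$, so the two must lie in a common cluster; and (ii) that pair's cluster still needs a color-$1$ point, and the only color-$1$ points within reach are the two $\bullet$ endpoints of $e_j$, while the construction already forbids centering a radius-$1/3$ disk at a $\bullet$. Consequently the unique cluster containing $e_j$'s pair is centered at one of the two interval points and \emph{consumes} exactly one of the two adjacent black endpoints of $e_j$.

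The second block is a counting argument on the path. Because every point is matched exactly once, each of the $N-1$ interval pairs lies in exactly one cluster, and distinct interval pairs lie in distinct clusters (a $3$-point cluster holds only one $\square$). These clusters consume pairwise distinct black points, since each $b_j$ lies in only one cluster. Hence exactly $N-1$ of the $N$ black points are covered by such internal clusters, so exactly one black point $b^\ast$ must be covered from outside the gadget. Its cluster then uses a $\square$ and a $\blacksquare$ that are not on the path (all path pairs are already consumed), so such external colored points must sit next to $b^\ast$; this happens only at a black point possessing a connection leaving the gadget, i.e.\ at some $x_i$. Finally, by the forbidden-black-center property the center of $b^\ast$'s cluster cannot be $b^\ast$ itself, so it is one of those external colored grid points out of the gadget, which is exactly the claimed statement.

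I expect the main obstacle to be the geometric case analysis behind facts (i)--(ii): one must rule out every ``non-canonical'' cluster --- splitting an interval pair across two clusters, co-clustering colored points taken from different intervals, or slipping in a color-$1$ point from a neighboring arm --- and this is precisely where the exact $1/3,2/3$ offsets and the alternating $\square/\blacksquare$ pattern along the path are needed. Once these local possibilities are eliminated, the path reduces to a chain in which each of the $N-1$ interval pairs points to exactly one of its two endpoints, with every endpoint pointed to at most once; the ``exactly one uncovered endpoint'' conclusion is then immediate from $N-(N-1)=1$.
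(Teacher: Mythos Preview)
Your approach is essentially the paper's: a counting argument along the gadget path. The paper's own proof is a three-line version of your second block---it notes that the path carries $3D+1$ points in total, that the placement of $\square$ and $\blacksquare$ forces exactly $3D$ of them to be covered by $D$ internal radius-$1/3$ disks, and concludes that one $\bullet$ is left over---without spelling out the geometric locking of interval pairs (your fact~(i)), the adjacency constraint (your fact~(ii)), or why the leftover $\bullet$ must be one of the $x_i$ rather than an interior path vertex. Your proposal is thus a strictly more detailed execution of the same idea; the case analysis you flag as the main obstacle is precisely what the paper sweeps into the phrase ``by the placement of color-2 and color-3 points.''
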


\begin{proof}
Throughout the proof, we refer to Figure~4. Let $x$ be colored by
color-1 (e.g., $\bullet$). In the rectilinear embedding, let the path length
between $x_1$ and $x_d$ be $D$. Then, the total number of points on the path
from $x_1$ to $x_d$, of colors 1, 2 and 3, is $3D+1$.
By the placement of color-2 and color-3 points in the gadget for $x$,
exactly $3D$ points of them can be covered by $D$ discrete disks of radii
1/3 (centered either at color-2 or color-3 points in the gadget).
Therefore, exactly one of $x_i$ must be covered by a discrete disk centered at a point out of the gadget.
\end{proof}

When $x_i$ is covered by a discrete disk of radius 1/3 centered at a point out of the gadget
$x$, we also say that $x_i$ is {\em pulled out} of $x$.

\begin{figure}[htbp]
%\psfrag{x1}{$x_1$}
%\psfrag{x2}{$x_2$}
%\psfrag{x3}{$x_3$}
\begin{center}
\includegraphics[bb=-40 0 460 110,totalheight=3.4cm]{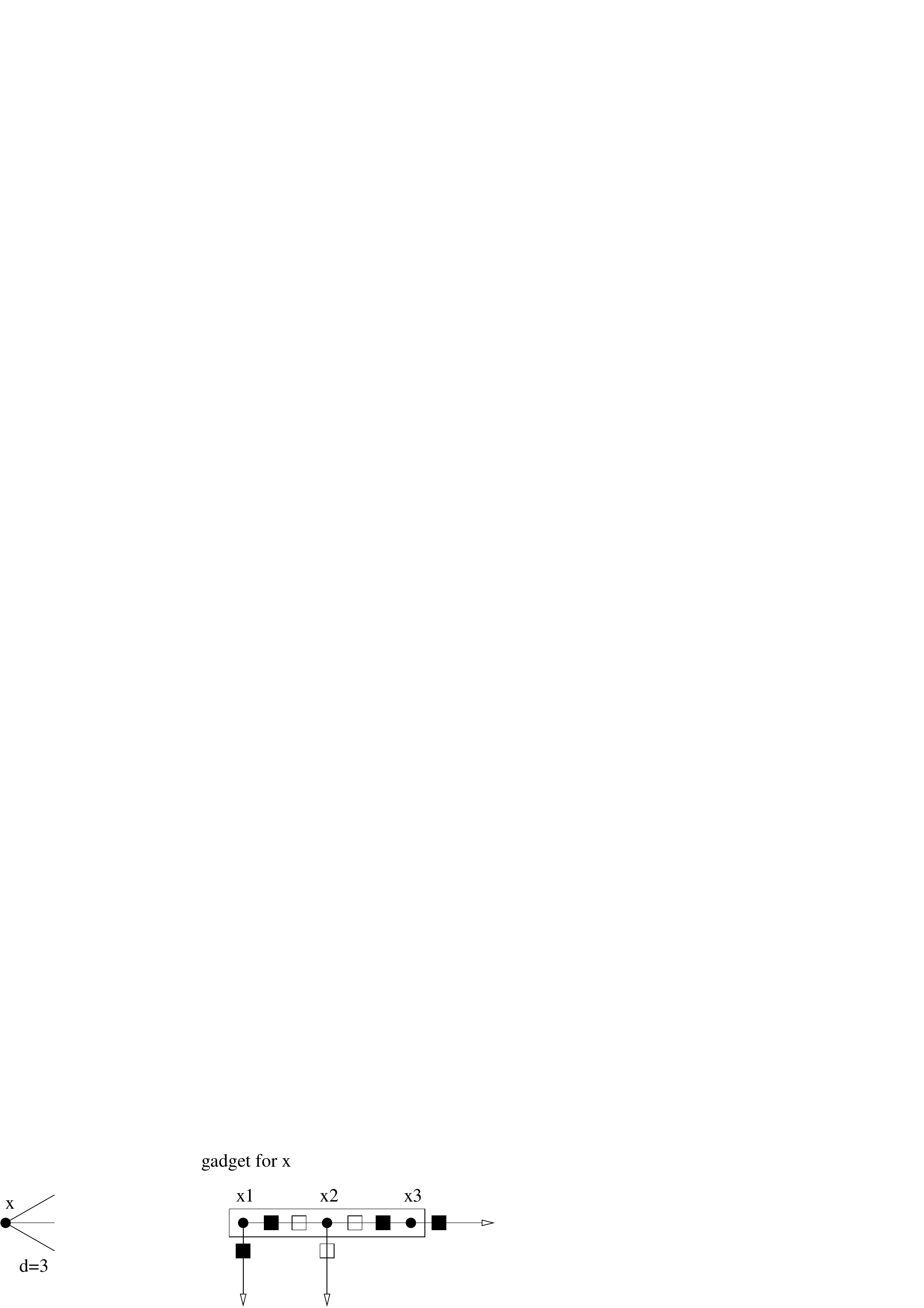}
\end{center}
\label{fig2}
\caption{\bf The gadget for element $x$.}
\end{figure}

We now illustrate how to construct a triple gadget $T=(x,y,z)$. It is
basically a grid point on which we put three points with different
colors. (In Figure~5, we simply use a $\blacktriangle$
representing such a triple gadget.) The interpretation of $T$ being
selected in a solution for Planar 3DM is that the three colored points
at $\blacktriangle$ is covered by a disk of radius zero, centered at one of these
three points. When one of these three points at
$\blacktriangle$ is covered by a disk of radius 1/3 centered at some other points 
(on the path from one of the elements $x,y$ or $z$ to $T$), we say that
such a point is {\em pulled out} of the triple gadget $T$ by the corresponding element gadget.

\begin{lemma}
In a triple gadget for $T=(x,y,z)$, to cover the three points representing $T$ using discrete disks of radii at most 1/3,
either all the three points are pulled out of the triple gadget $T$ by the three element gadgets respectively,
or none is pulled out. In the latter case, these three points can be covered by a discrete disk of radius zero.
\end{lemma}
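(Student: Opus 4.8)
The plan is to analyze the triple gadget $T=(x,y,z)$ as a single grid point carrying three superimposed colored points, one of each color, together with the three incident paths leading back to the element gadgets for $x$, $y$, and $z$. The key structural observation I would establish first is a local \emph{parity/consistency} constraint on each path: just as in the element gadget (Lemma on pulling out), the $\square$ and $\blacksquare$ points placed at the $1/3$ and $2/3$ positions along each grid edge force a rigid tiling by radius-$1/3$ discs, so that along the path connecting an element gadget to $T$, a disc covering a point ``moves'' by one grid step in a propagating fashion. The first step, therefore, is to set up this propagation lemma precisely: a point at the $T$-end of a connecting path is pulled out of $T$ if and only if the matching of disc-centers to colored points along that path is ``shifted,'' and this shift must originate from the corresponding $x_i$ (etc.) being pulled out of its element gadget at the far end.

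Given that propagation structure, I would argue the dichotomy by a counting argument analogous to the element-gadget proof. At the triple grid point there are exactly three colored points, one of each color. For the three points to be covered by discs of radius at most $1/3$, each must either (i) be covered by a radius-zero disc centered at itself, pairing with the other two at $T$, or (ii) be pulled out along its connecting path toward its element. The crux is to show a \emph{mixed} configuration is impossible: suppose, say, the color-1 point is pulled out toward $x$ but the color-2 and color-3 points are not. Then the color-2 and color-3 points remain at the grid point and must be covered by discs of radius $\le 1/3$; since there is no third differently-colored point available at $T$ (the color-1 point has left), any disc covering both of them would have to recruit a colored point from a connecting path, but the $1/3,2/3$ placement was specifically chosen (as stated before the lemma) to make such cross-covering impossible at radius $1/3$. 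This forces the remaining points at $T$ either to form a complete rainbow cluster (all three present) or to all be pulled out simultaneously.

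The heart of the proof is thus the impossibility of the mixed case, and I expect this to be the main obstacle: I must verify that the color placement along each of the three incident paths genuinely blocks every way of covering a leftover point at $T$ by borrowing a same-or-different-colored point from a path, for all three paths simultaneously and for all choices of which subset of the three points is pulled out. Concretely, I would enumerate the (few) local disc placements near the grid point $T$, using that each disc is centered at a colored grid point and has radius $\le 1/3$, and check that in every mixed scenario some colored point near $T$ is left uncoverable or some cluster ends up monochromatic (violating the different-color requirement). Once the mixed case is ruled out, the two surviving cases are exactly the statement: either all three are pulled out by their respective element gadgets, or none is, in which case a single radius-zero disc centered at any one of the three coincident points covers all three and yields a valid rainbow cluster. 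I would close by noting that because every valid cluster in the construction lies on a horizontal or vertical segment, the $L_2$ and $L_\infty$ analyses coincide, so no separate argument for the two metrics is needed.
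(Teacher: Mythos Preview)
Your proposal is correct and follows the same essential logic as the paper: rule out the ``mixed'' case (exactly one or two of the three coincident points pulled out) by arguing that the leftover point(s) at $T$ cannot be completed to a valid three-color cluster within radius $1/3$, leaving only the all-or-nothing alternatives.

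The main difference is one of scale. The paper's proof is a two-sentence local case check: if one or two points are pulled out, the remaining point(s) at $T$ cannot be covered by any radius-$1/3$ disk that also picks up three distinct colors, so the mixed case is impossible; hence either all three are pulled out or none is, and in the latter case a radius-zero disk at $T$ works. Your plan builds considerably more scaffolding --- a propagation lemma along each connecting path, a parity/counting argument, and an enumeration of local disk placements --- none of which the paper invokes for this lemma. In particular, the propagation back to the element end is not needed here: ``pulled out of $T$'' is \emph{defined} to mean covered by a disk centered on an adjacent path point, so the lemma is purely local to the grid point $T$ and its immediate neighbors. Your extra machinery is not wrong, and it would make the argument more self-contained, but for this lemma it is more than the paper uses.
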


\begin{proof}
Throughout the proof, we refer to Figure~5.
At the triple gadget $T$, if only one point (say $\bullet$) is pulled out or
two points (say, $\bullet$ and $\square$) are pulled out, then the remaining
points in the triple, $\square$ and $\blacksquare$ or $\blacksquare$ respectively,
could not be properly covered by a discrete disk of radius 1/3 --- such a 
disk would not be able to cover a cluster of exactly three points of distinct
colors.
Therefore, either all the three points associated with $T$ are pulled out by 
the three corresponding element gadgets, hence covered by three different
discrete disks of radii 1/3; or none of these three points is pulled out.
Clearly, in the latter case, these three points associated with $T$
can be covered by a discrete disk of radius zero, as a cluster.
\end{proof}

In Figure~5, we show the case when $x$ would not pull any point out of the gadget for
$T$. By Lemma 5, $y$ and $z$ would do the same, leading
$T=\langle x,y,z\rangle$ to be selected in a solution $S$
for Planar 3DM.
Similarly, in Figure~6, $x$ would pull a $\bullet$ point out
of $T$. Again, by Lemma 5, $y$ and $z$ would pull $\square$ and $\blacksquare$
points (one each) out of $T$, which implies that $T$ would not be selected
in a solution $S$ for Planar 3DM.

\begin{figure}[htbp]
%\psfrag{x1}{$x_1$}
%\psfrag{x2}{$x_2$}
%\psfrag{x3}{$x_3$}
\begin{center}
\includegraphics[bb=0 0 400 160,totalheight=4.6cm]{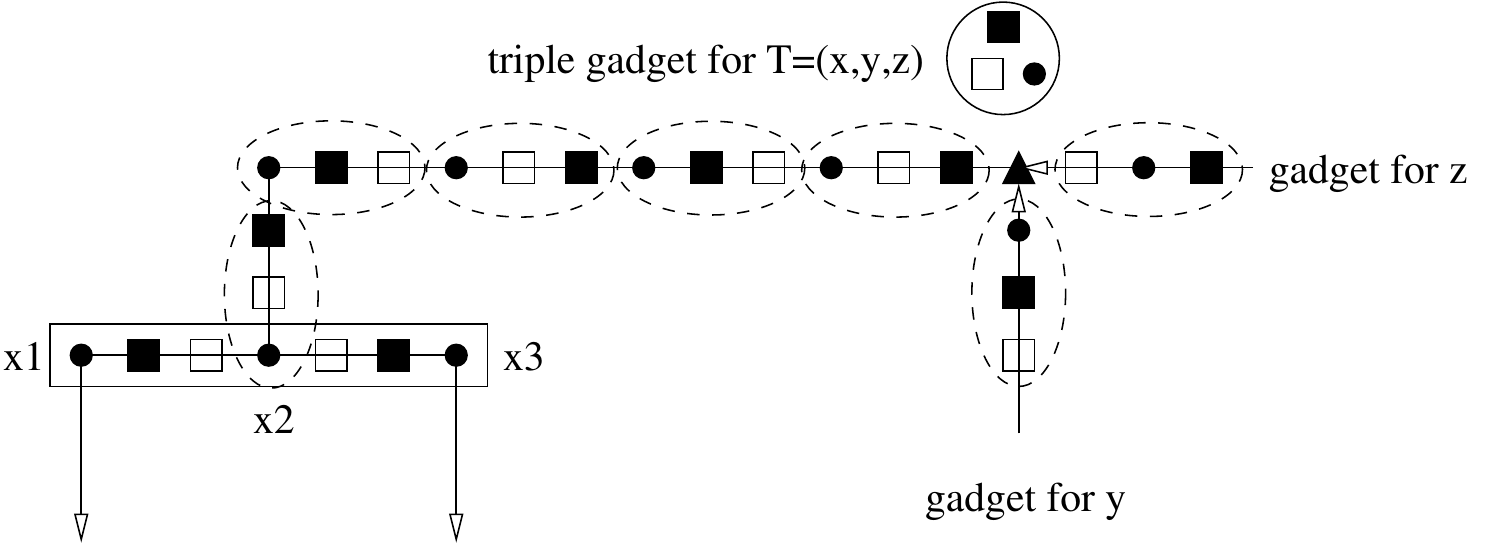}
\end{center}
\label{fig3}
\caption{\bf The triple gadget for $T=\langle x,y,x\rangle$ (represented as $\blacktriangle$, which is really putting three element points on a grid point). In this case the triple $\langle x,y,z\rangle$
is selected in the final solution (assuming operations are similarly performed on $y,z$). Exactly one of $x_i$ (in this case
$x_2$) is pulled out of the gadget for the element $x$.}
\end{figure}

\begin{figure}[htbp]
%\psfrag{x1}{$x_1$}
%\psfrag{x2}{$x_2$}
%\psfrag{x3}{$x_3$}
\begin{center}
\includegraphics[bb=0 0 400 160,totalheight=4.6cm]{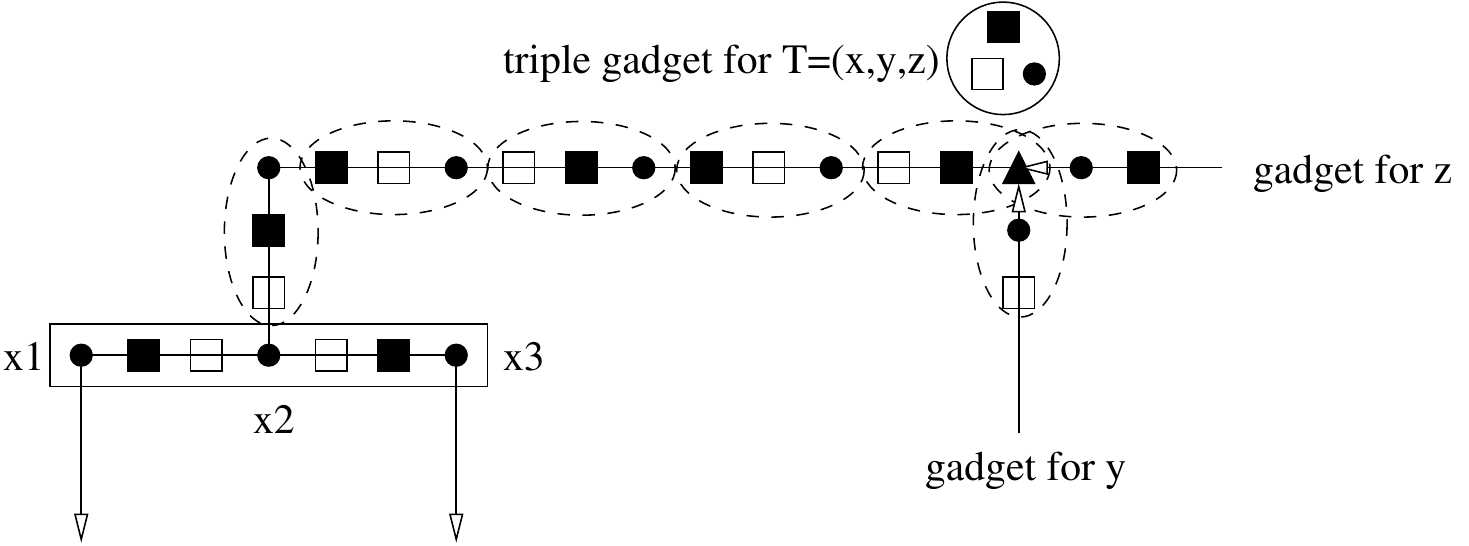}
\end{center}
\label{fig4}
\caption{\bf The triple gadget for $T=\langle x,y,x\rangle$ (represented as $\blacktriangle$, which is really putting three element points on a grid point). In this case the triple $\langle x,y,z\rangle$ would not be selected in the final solution. Note that the black round point in the triple gadget is pulled out by the element $x$, and the other two points are pulled out similarly by the element $y$ and $z$.}
\end{figure}

We hence have the following theorem.

\begin{theorem}
%\label{thm02}
The decision versions of 3-Bottleneck Matching for both the `Without
Replacement' and `With Replacement' cases are NP-complete, and the decision
version of the continuous 3-Bottleneck Matching is NP-hard.
\end{theorem}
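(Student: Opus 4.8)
The plan is to complete the reduction from Planar 3DM by fixing the target radius at $r=1/3$ and proving that the constructed instance of 3-Bottleneck Matching admits a covering in which every disk has radius at most $1/3$ if and only if the original Planar 3DM instance has a perfect matching $S$. Membership in NP for the two discrete versions has already been observed (a guessed set of $n$ disks is verified by a max-flow computation), so it remains to establish NP-hardness; NP-completeness for the discrete versions and NP-hardness for the continuous version then follow.

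For the forward direction, given a perfect matching $S$, I would cover each triple $T=(x,y,z)\in S$ by a radius-$0$ disk anchored at its grid point, which is a legal $3$-cluster of distinct colors, and for every $T\notin S$ let the three element gadgets pull the three points of $T$ out along their connecting paths as in Figure~6. Since each element lies in exactly one triple of $S$, this pulls out exactly one node $x_i$ of each element gadget, which together with the $D$ internal disks guaranteed by Lemma~4 covers every gadget point; every disk used has radius at most $1/3$ and covers exactly three points of distinct colors. The converse is the heart of the argument: from any radius-$\leq 1/3$ covering I would read off the set $S=\{T : T \mbox{ is not pulled out}\}$ and show $S$ is a perfect matching. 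Lemma~5 forces each triple gadget into exactly one of two states (all three points pulled out, or none), while the counting/parity behaviour of each element-to-triple path links the ``$T$ not pulled out'' state at the triple to the ``node $x_i$ pulled out'' state inside the element gadget. Lemma~4 guarantees that each element gadget pulls out exactly one node, so each element of each color occurs in exactly one triple of $S$, whence $S$ is a perfect matching of size $\gamma$.

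The with-replacement version requires only a small addition: within radius $1/3$ of any colored grid point there are too few colored points of the three colors for a single center to serve two distinct clusters, so the $1/3$ and $2/3$ placements force the centers of a valid radius-$\leq 1/3$ covering to be distinct. Hence a with-replacement covering of radius $1/3$ exists exactly when a without-replacement one does, and the equivalence with Planar 3DM carries over verbatim. For the continuous version I would argue that allowing the disk centers to be arbitrary points of the plane creates no new radius-$\leq 1/3$ covering: the gadget points are placed so that no disk of radius $1/3$, wherever centered, can enclose three points of three distinct colors except in the configurations already permitted by Lemmas~4 and~5. A continuous radius-$1/3$ covering thus again exists iff Planar 3DM is solvable, giving NP-hardness; since a continuous center is not obviously polynomially certifiable, I claim only NP-hardness here, not membership in NP.

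The step I expect to be the main obstacle is precisely this robustness check for the continuous and with-replacement cases, namely verifying that the $1/3$/$2/3$ spacing in the element and triple gadgets genuinely rules out every ``illegal'' radius-$1/3$ cluster when the center may be placed anywhere or reused. This reduces to a careful case analysis of which triples of distinct-colored points can simultaneously fall inside one disk of radius $1/3$ along the rectilinear paths, and it is here that the exact geometry of the embedding, rather than the combinatorics of Lemmas~4 and~5, must be pinned down.
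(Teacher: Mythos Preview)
Your proposal is correct and follows essentially the same route as the paper: set $r=1/3$, use Lemma~4 and Lemma~5 to show that a radius-$1/3$ covering exists iff Planar 3DM has a perfect matching, and then argue that the With-Replacement and continuous variants inherit the same reduction. The one notable difference is emphasis. You flag the robustness check for the With-Replacement and continuous versions as the main obstacle and sketch a geometric case analysis; the paper, by contrast, dispatches both in a sentence each. For With-Replacement it simply states the proof carries over ``without any change,'' and for the continuous case it observes that for the only relevant $3$-cluster shape---three collinear points at spacing $1/3$---the optimal continuous center is the middle point, which is already one of the constructed discrete points, so continuous centers confer no advantage. Your proposed case analysis would reach the same conclusion, just with more work; the paper's shortcut is that since every valid cluster has diameter exactly $2/3$, the smallest enclosing ($L_\infty$ or $L_2$) disk has radius $1/3$ and is centered at the midpoint, which by construction is a gadget point.
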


\begin{proof}
We discuss the `Without Replacement' discrete case first, and will drop
the keyword `Without Replacement' until the end of the proof.
As explained a bit earlier, (the decision version of) 3-Bottleneck Matching is obviously in NP.
Moreover, we show in Lemma 5 and Lemma 6 that, given an instance for
Planar 3DM with $n$ triples over $3\gamma$ base elements we can convert
it into an instance $I$ of $3Kn$ points of three colors ($Kn$ points are
of color-1, color-2 and color-3 respectively) in polynomial time,
where $K$ is related to this polynomial running time.
We just formally argue below that Planar 3DM has a solution of $\gamma$ triples if
and only if
the converted instance $I$ of $3Kn$ points can be partitioned into
$Kn$ clusters each covered by a discrete disk of radius 1/3; moreover, there are
exactly $\gamma$ such clusters which are covered by discrete disks with radii zero.

`Only if part:' If the Planar 3DM instance has a solution, we have a set $S$ of
$\gamma$ triples which uniquely cover all the $3\gamma$ elements. Then, at each of the
corresponding $\gamma$ triple gadgets, we use a discrete disk of radius zero
to cover the corresponding three points. By Lemma 5,
in each element gadget $x$ exactly one point $x_i$ could be pulled out of the
gadget, connecting to these selected triple gadgets. By Lemma 6, the 
triple gadgets can be covered exactly in two ways. Hence the triples not
corresponding to $S$ will be covered in the other way: for
$T=\langle x,y,z\rangle$ not in $S$, one point of each color will be pulled
out of the triple gadget for $T$.

`If part:' If the converted instance $I$ of $3Kn$ points can be partitioned
into $Kn$ clusters each covered by a discrete disk of radius 1/3 and there are
exactly $\gamma$ clusters whose covering discrete disks have radii zero, then the triples
corresponding these clusters of point form a solution to the original
Planar 3DM instance. The reason is that, by Lemma 6, the remaining points
will be covered by a discrete disk of radius 1/3 (and cannot be further shrunk).
Moreover, by Lemma 5, at each element gadget, exactly one point will be
fulled out, leading to the corresponding triple gadget being covered by
a discrete disk of radius zero --- which implies that exactly one element
is covered by a selected triple.

It can be seen by now the proof works for the `With Replacement' discrete
version without any change in the proof. For the continuous version,
the NP-hardness holds with the same reduction. The reason for the NP-hardness
to hold is that the optimal grouping of three points (0,0), (0,1/3), (0,2/3)
is to use (0,1/3) as the continuous center, even though itself is still
discrete. However, the NP membership does not hold anymore for the
continuous case (since a guessed solution involve real numbers).
This closes our proof.
\end{proof}

Note that in the above proof, if Planar 3DM does not have a solution,
then we need to use discrete disks of radii at least 2/3 to have a valid 
solution for 3-Bottleneck Matching. This implies that finding a
factor-$(2-\varepsilon)$ approximation for (the optimization version of)
3-Bottleneck Matching remains NP-hard.

\begin{corollary}
It is NP-hard to approximate (the optimization version of) 3-Bottleneck Matching within a factor $2-\varepsilon$, for some $\varepsilon>0$ and for all the three versions.
\end{corollary}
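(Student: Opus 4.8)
The plan is to show that the reduction from Planar 3DM already constructed for Theorem 7 is in fact \emph{gap-introducing}, and then to invoke the standard gap argument. Let $r^*$ denote the optimal (minimum) value of $\max_{i=1..n} r^*_i$ over all valid clusterings of the instance $I$ produced by the reduction. The two facts I would establish are: (i) if the Planar 3DM instance is a YES-instance, then $r^*=1/3$; and (ii) if it is a NO-instance, then $r^* \geq 2/3$. Fact (i) is exactly what the ``only if'' direction of Theorem 7 gives --- a solution $S$ of $\gamma$ triples yields a clustering in which every cluster is covered by a discrete disk of radius $1/3$ (with $\gamma$ of them having radius $0$), and no radius smaller than $1/3$ can suffice since adjacent colored points in an element gadget are $1/3$ apart. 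The content of the corollary lives in fact (ii).

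For fact (ii) I would argue as follows. Suppose, for contradiction, that the NO-instance admitted a valid clustering with $r^* < 2/3$. Since every constructed point sits either at a grid vertex or at a $1/3$ or $2/3$ position along a grid edge, the minimum positive pairwise distance forcing a cluster to spread is $1/3$, and the next distance at which a disk can reach a point in a \emph{different} grid cell is $2/3$; hence any disk of radius strictly less than $2/3$ that covers a valid three-colored cluster must in fact have radius $0$ or $1/3$. Thus a clustering with $r^* < 2/3$ is a clustering with all radii in $\{0, 1/3\}$, which is precisely the regime analyzed by the element-gadget and triple-gadget lemmas (Lemmas 5 and 6). By those lemmas every element gadget pulls out exactly one point and every triple gadget is covered in the all-pulled-out or none-pulled-out fashion; as shown in the ``if'' direction of Theorem 7, this forces the selected (radius-$0$) triples to constitute a Planar 3DM solution --- contradicting the assumption that we started from a NO-instance. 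Therefore $r^* \geq 2/3$. The main obstacle is this ``snapping'' step: rigorously ruling out feasible radii in the open interval $(1/3, 2/3)$, which requires the careful geometric case analysis (already implicit in the gadget placement) confirming that no partial repair between the two quantized values is possible.

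With facts (i) and (ii) in hand, the inapproximability follows immediately. If some polynomial-time algorithm approximated 3-Bottleneck Matching within a factor $2-\varepsilon$, then on instance $I$ it would return a value at most $(2-\varepsilon)/3 < 2/3$ in the YES case and at least $2/3$ in the NO case, so comparing its output against the threshold $2/3$ would decide Planar 3DM in polynomial time, forcing P $=$ NP. Finally, I would note that the construction and the gap are identical for the ``With Replacement'' discrete version (the reduction proof of Theorem 7 already transfers verbatim) and for the continuous version (where $(0,1/3)$ remains the optimal center of the triple $(0,0),(0,1/3),(0,2/3)$, so $r^*$ is unchanged); hence the factor-$(2-\varepsilon)$ lower bound holds for all three versions.
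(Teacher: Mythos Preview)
Your proposal is correct and follows essentially the same route as the paper: the paper's justification for the corollary is simply the observation (stated immediately before the corollary) that if the Planar 3DM instance has no solution then any valid clustering of the reduced instance requires radius at least $2/3$, while a YES-instance yields radius $1/3$, producing the factor-$2$ gap. You supply more detail than the paper on the ``snapping'' step---arguing that radii in $(1/3,2/3)$ are infeasible because inter-point distances in the gadgets are quantized---whereas the paper asserts this without elaboration; otherwise the argument, including the extension to the With-Replacement and continuous versions, is the same.
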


We comment that the NP-hardness proofs in \cite{Goossens10,Spieksma96} also
use a reduction from Planar 3DM; however, those proofs are only for the $L_2$
metric. Here, it is clear that our reduction also works for the $L_\infty$
metric without any modification --- this is due to that all clusters in
our construction are either horizontal or vertical, therefore the distances
within a cluster would be the same under $L_2$ and $L_\infty$. With respect to
the CPD-B problem, points in color-$i$, $i=1,2,3$, are the basis for us to
construct a persistence diagram. To handle the line $Y=X$ in a persistence
diagram, let the diameter of the (union of the)
three constructed point sets of different colors be $\hat{D}$, we then
translate these points as a whole set rigidly such that all the points are at least
$2\hat{D}$ distance away from $Y=X$. We then have three persistence diagrams.
(The translation is to neutralize the infinite 
multiplicity of $Y=X$, i.e., to enforce that all points on $Y=X$ can be
ignored when computing the bottleneck distance between the corresponding
persistence diagrams.) Hence, we have the following corollary.

\begin{corollary}
It is NP-hard to approximate (the optimization version of) Center Persistence
Diagram problem under the bottleneck distance for $m\geq 3$ within a factor
$2-\varepsilon$, for some $\varepsilon>0$ and for all the three versions.
\end{corollary}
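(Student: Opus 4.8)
The plan is to obtain this as a direct consequence of Corollary 8, which already establishes $(2-\varepsilon)$-inapproximability for 3-Bottleneck Matching in all three versions. The only gap to bridge is structural: a persistence diagram carries, besides its feature point set, the diagonal $Y=X$ with infinite multiplicity, and the bottleneck distance between two diagrams (Subsection 2.1) is computed on the augmented sets $P_i\cup P_j'$ and $P_j\cup P_i'$, in which diagonal projections supply zero-cost matching edges. I must ensure that these extra edges can never interfere with the $1/3$-versus-$2/3$ hardness gap produced in Section 3.

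First I would take the three colored point sets built in Section 3 --- the color-$i$ points forming the basis of ${\cal P}_i$ --- and translate their union rigidly so that every feature point lies at distance at least $2\hat{D}$ from the diagonal, where $\hat{D}$ is the diameter of the union. Since the translation is rigid, all pairwise distances among feature points are preserved, so the entire cluster and covering structure of Lemmas 5 and 6 is unchanged, and we obtain three legitimate persistence diagrams ${\cal P}_1,{\cal P}_2,{\cal P}_3$. The main obstacle, and the heart of the argument, is then to verify that the diagonal becomes \emph{inert} for every target radius below $2\hat{D}$. Consider any candidate center diagram ${\cal Q}$ whose feature set $Q$ of size $n$ lies in the same far-from-diagonal region (for the discrete versions this is automatic, as $Q$ is drawn from $\cup_i P_i$; for the continuous version the center points of interest stay within the diameter-$\hat{D}$ hull). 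In computing $d_B({\cal Q},{\cal P}_i)$, a bijection that matches a genuine feature point to a point on the diagonal costs at least $2\hat{D}$, because every feature point is that far from $Y=X$; whereas matching feature points of ${\cal Q}$ to feature points of ${\cal P}_i$ costs at most $\hat{D}$, since all features lie within a set of diameter $\hat{D}$. Hence any bottleneck matching of cost strictly below $2\hat{D}$ must pair feature points with feature points, and the diagonal contributes nothing to the optimum. Consequently, for the relevant radii $1/3$ (YES instances) and $2/3$ (NO instances), both far below $2\hat{D}$, the CPD-B instance behaves exactly as the underlying 3-Bottleneck Matching instance.

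Finally I would note that the construction is metric-independent in the way needed: as remarked just before the corollary, every cluster in Section 3 is an axis-aligned segment, so $L_2$ and $L_\infty$ distances coincide within clusters, and CPD-B is defined under $L_\infty$. The same translated instance therefore witnesses a $1/3$-versus-$2/3$ gap simultaneously for the `Without Replacement', `With Replacement', and continuous versions, inheriting exactly the three-version coverage of Corollary 8. Distinguishing the two cases would solve Planar 3DM, so no polynomial-time algorithm can approximate CPD-B under the bottleneck distance for $m\geq 3$ within a factor $2-\varepsilon$ unless P $=$ NP, which is the claim.
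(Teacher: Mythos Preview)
Your proposal is correct and follows the same approach as the paper: translate the three colored point sets from the Section~3 construction to lie at $L_\infty$-distance at least $2\hat{D}$ from the diagonal, then argue that for any target radius below $2\hat{D}$ the diagonal cannot participate in an optimal bijection, so the CPD-B instance collapses to the underlying 3-Bottleneck Matching instance and inherits the $1/3$-versus-$2/3$ gap of Corollary~8. In fact you spell out the inertness argument (feature-to-diagonal costs at least $2\hat{D}$, feature-to-feature at most $\hat{D}$) more explicitly than the paper, which simply states that the translation ``neutralizes the infinite multiplicity of $Y=X$''; your handling of the continuous case --- that any useful center point must sit in the far region, else it cannot cover a feature point of $P_i$ within radius below $2\hat D$ --- is the right observation and matches what the paper leaves implicit.
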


In the next section, we present tight approximation algorithms for the above
problems.

\section{A Tight Approximation}
\subsection{Approximation for $m$-Bottleneck Matching}
We first present a simple Algorithm 1 for $m$-Bottleneck Matching as follows.
Recall that in the $m$-Bottleneck Matching problem we are given $m$ sets of
planar points $P_1,...,P_{m}$, all with the same size $n$.
Without of generality, let the points in $P_i$ be colored with color-$i$.
\begin{enumerate}
\item Pick any color, say, color-$1$.
\item Compute the bottleneck matching $M_{1,i}$ between $P_1$ and $P_i$ for $i=2,...,m$.
\item For the $m-1$ edges $(p^{1}_{j_1},p^{i}_{j_i}) \in M_{1,i}$ for $i=2,...m$, where $p^{x}_{y} \in P_x$ for $x=1,...,m$, form a cluster $\{p^{1}_{j_1}, p^{2}_{j_2},...,p^{m}_{j_m}\}$ with $p^{1}_{j_1}$ as its center.
\end{enumerate}

We comment that the algorithm itself is similar to the one given for $m=3$ in
\cite{Goossens10}, which has a different objective function (i.e., minimizing the
maximum perimeter of clusters). We show next that Algorithm 1 is a factor-$2$
approximation for $m$-Bottleneck Matching. Surprisingly, the main tool here is
the triangle inequality of a distance measure. Note that we can not only
handle for any given $m\geq 3$, we also need some twist in the proof a bit
later for the three versions of the Center Persistence Diagram problem, where
the diagrams could have different sizes.

%We first present a simple Algorithm 1 for $m$-Bottleneck Matching as follows.
%\begin{enumerate}
%\item Pick any color-$i$, say, color-$1$.
%\item Compute the bottleneck matching $M_{1,2}$ between $P_1$ and $P_2$.
%\item Compute the bottleneck matching $M_{1,3}$ between $P_1$ and $P_3$.
%\item For any pair of edges $(p_{1,i}, p_{2,j}) \in M_{1,2}$ and $(p_{1,i}, p_{3,k}) \in M_{1,3}$, where $p_{x,y} \in P_x$ for $x=1, 2, 3$, form a cluster $\{p_{1,i}, p_{2,j}, p_{3,k}\}$ with $p_{1,i}$ as its discrete center.
%\end{enumerate}
%
%We show next that Algorithm 1 is a factor-$2$ approximation for $3$-Bottleneck Matching. We comment that the algorithm itself is very much the same as in
%\cite{Goossens10}, but the proof needs some twist --- especially for the
%`With Replacement' case (see Section 4.3).
\begin{theorem}
\label{thm09}
{\em Algorithm 1} is a polynomial time factor-$2$ approximation for $m$-Bottleneck Matching for all the three versions (i.e., `Without Replacement', `With Replacement' and continuous versions).
\end{theorem}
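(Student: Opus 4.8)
The plan is to show that the objective value $\APP$ produced by Algorithm 1 satisfies $\APP \le 2\,\OPT$, where $\OPT=r^*$ is the optimal covering radius of whichever version is under consideration. First I would observe that the center set returned by the algorithm is exactly $P_1$: every point of $P_1$ is the center of one cluster, and its color-$i$ cluster-mate is its partner under the bottleneck matching $M_{1,i}$. Hence the largest cluster radius, which is the algorithm's objective value, equals
$$\APP=\max_{i=2,\ldots,m} d_B(P_1,P_i),$$
because for a fixed $i$ the largest center-to-color-$i$ distance taken over all clusters is precisely the bottleneck value of $M_{1,i}$, and color-$1$ contributes $d_B(P_1,P_1)=0$. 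Crucially, $Q=P_1$ is a set of $n$ distinct points of $\cup_i P_i$, so it is simultaneously feasible for the `Without Replacement', `With Replacement', and continuous versions; it therefore suffices to bound $\APP$ against the optimum of each version separately.

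The main step is a lower bound on $\OPT$ via the triangle inequality, exactly as the remark preceding the theorem advertises. Fix any version and consider its optimal solution, which partitions the points into $n$ clusters, each holding one point of every color and each covered by some center $c$ with radius at most $r^*=\OPT$. Inside one cluster, let $p^1\in P_1$ and $p^i\in P_i$ be its color-$1$ and color-$i$ members; since both lie within distance $r^*$ of $c$, the triangle inequality yields $d(p^1,p^i)\le d(p^1,c)+d(c,p^i)\le 2r^*$. Letting the cluster range over all $n$ of them, the color-$1$/color-$i$ pairs constitute a perfect matching between $P_1$ and $P_i$ whose every edge has length at most $2r^*$. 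I would stress that this argument is blind to where $c$ lives --- a discrete point chosen with or without replacement, or an arbitrary planar point --- so the identical bound holds for all three versions.

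Combining the two observations closes the proof. The matching just exhibited certifies $d_B(P_1,P_i)\le 2r^*$ for every $i\ge 2$, since the bottleneck distance is the minimum of the bottleneck value over all bijections. As Algorithm 1 computes exactly these optimal matchings $M_{1,i}$, we get
$$\APP=\max_{i\ge 2} d_B(P_1,P_i)\le 2r^*=2\,\OPT,$$
the claimed factor-$2$ guarantee; and since the $m-1$ bottleneck matchings are each computable in polynomial time (for instance by the $L_\infty$ bottleneck-matching routine of Section~2), the whole algorithm runs in polynomial time. I expect the only delicate point --- not so much an obstacle --- to be the uniform handling of the three versions: one must verify both that the returned $Q=P_1$ stays feasible in each setting and that the triangle-inequality bound on $\OPT$ is independent of the nature of the optimal centers. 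Both hold, respectively because a set of distinct input points is always feasible and because the triangle inequality is valid for any center point however it was selected.
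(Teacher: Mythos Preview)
Your proof is correct and follows essentially the same triangle-inequality argument as the paper: exhibit, from any optimal clustering, a $P_1$--$P_i$ matching with all edges of length at most $2\,\OPT$, and conclude $\APP=\max_i d_B(P_1,P_i)\le 2\,\OPT$. Your treatment is in fact slightly cleaner than the paper's, since you handle all three versions uniformly by observing that the triangle inequality is indifferent to the nature of the optimal center, whereas the paper inserts an unnecessary center-switching step for the `With Replacement' case before arriving at the same inequality.
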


\begin{proof}
One clearly sees that the running time of Algorithm 1 is $O(mn^{1.5} \log n)$.

(1) We discuss the `Without Replacement' first.
In an optimal solution for $m$-Bottleneck Matching with its radius $\OPT$,
let $\{q_1, q_2,...,, q_m\}$ denote a cluster with its discrete center $\hat{q}$,
where $q_i$ is in color-$i$, for $i=1,...,m$.

From $\OPT \ge \max \{ d(\hat{q},q_1), d(\hat{q},q_2),..., d(\hat{q},q_m)\}$
and the triangle inequality,
we have $d(q_1, q_j) \le d(q_1, \hat{q}) + d(q_j, \hat{q}) \le 2 \cdot \OPT$
for $j=2,...,m$.
This implies a matching between $P_1$ and $P_i$ with radius at most $2 \cdot \OPT$, for $i=2,...,m$.

Let $\APP$ denote the maximum radius between the $m-1$ bottleneck matchings
computed in Algorithm 1; then the radius of the produced solution is $\APP$,
and we have $\APP \le d(q_1,q_j) \le 2 \cdot \OPT$ for $j=2,...,m$.
That is, Algorithm 1 is a polynomial time factor-$2$ approximation for
$m$-Bottleneck Matching (for the `Without Replacement' version).

(2) We next discuss the `With Replacement' case. Note that we never need to
change the algorithm.

In this case, first note that the points in the center ${\cal C}$ are selected
from $P_i$'s with replacement.
%, it can be easily seen that the problem
%can also be solved in polynomial time with a perfect matching method
%combined with a binary search for the radius when $m=2$, and the
%NP-hardness proof also works for $m\geq 3$.
% We show below that the
%2-approximation algorithm could still be applied to this case. This is done
%in the following, for $m=3$ under the Euclidean distance.
Then, given an optimal solution for this case we notice that some points in
$P_i$'s can be selected multiple times (i.e., more than once) in ${\cal C}$. Let $q$ be such
a point in ${\cal C}$. If $q$ covers a cluster including itself, then we
leave that cluster alone; otherwise, pick any cluster covered by $q$ and
also leave $q$ and that cluster alone. Then, anytime when $q$ covers
$\{p^{1}_{j_1}, p^{2}_{j_2},..., p^{m}_{j_m}\}$ once more with $q\not\in
\{p^{1}_{j_1}, p^{2}_{j_2},..., p^{m}_{j_m}\}$, we switch the center for
this cluster to $p^{1}_{j_1}$ (i.e., the point with color-1). Clearly, we have
$$d(p^{1}_{j_1},p^{i}_{j_i})\leq d(p^{1}_{j_1},q) + d(q,p^{i}_{j_i})\leq 2\cdot\OPT,$$
for $i=2,...,m$.
Then, combined with the other (`Without Replacement') case covered in part (1),
we can conclude that Algorithm~1 provides a 2-approximation for the `With Replacement'
case as well when $m\geq 3$. In fact, the example in Figure~3 shows a simple
matching lower bound of factor 2. 

(3) The proof for the continuous case would be almost identical as in part (1);
in fact, we just need to define $\hat{q}$ as an arbitrary point covering the
given cluster $\{q_1, q_2,...,, q_m\}$. And the remaining arguments would be
the same.
\end{proof}

\subsection{Generalization to the Center Persistence Diagram Problem under the Bottleneck Distance}

First of all, note that the above approximation algorithm works for
$m$-Bottleneck Matching when the metric is $L_\infty$. Hence, obviously it
works for the case when the input is a set of $m$ persistence diagrams (all
having the same size), whose (feature) points are all far away from $Y=X$,
and the distance measure is the bottleneck distance. (Recall that, when
computing the bottleneck distance between two persistence diagrams using a
projection method, we always use the $L_\infty$ metric to measure the
distance between two points.)

We next show how to generalize the factor-2 approximation algorithm for
$m$-Bottleneck Matching to the Center Persistence Diagram problem, first
for $m=3$. Note that we are given $m$ persistence diagrams ${\cal P}_1$,
${\cal P}_2,...,$ and ${\cal P}_m$, with the corresponding non-diagonal
point sets being $P_1,P_2,...,$ and $P_m$ respectively. Here the sizes of
$P_i$'s could be different and we assume that the points in $P_i$ are of
color-$i$ for $i=1,...,m$.

Given a point $p\in P_i$, let $\tau(p)$ be the (perpendicular) projection of
$p$ on the line $Y=X$. Consequently, let $\tau(P_i)$ be the projected points
of $P_i$ on $Y=X$, i.e.,
$$\tau(P_i)=\{\tau(p)|p\in P_i\}.$$

When $m=2$, i.e., when we are only given ${\cal P}_1$ and ${\cal P}_2$, not
necessarily of the same size, it was shown by Edelsbrunner and Harer that 
$d_B({\cal P}_1,{\cal P}_2)=d^\infty_B(P_1\cup \tau(P_2), P_2\cup \tau(P_1))$
\cite{EH10}. (Note that $|P_1\cup\tau(P_2)|=|P_2\cup \tau(P_1)|$.)
We next generalize this result.
For $i\in M=\{1,2,...,m\}$, let $M(-i)=\{1,2,...,i-1,i+1,...,m\}$.
We have the following lemma.

\begin{lemma}
Let $\{i,j\}\subseteq M=\{1,2,...,m\}$.
Let $\tau_i(P_k)$ be the projected points of $P_k$ on $Y=X$ such that these
projected points all have color-$i$, with $k\in M, i\neq k$.
Then,
$$d_B({\cal P}_i,{\cal P}_j)=d^\infty_B(P_i\bigcup\limits_{k\in M(-i)} \tau_i(P_k),P_j\bigcup\limits_{k\in M(-j)} \tau_j(P_k)).$$
\end{lemma}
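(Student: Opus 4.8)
The plan is to reduce the claim to the $m=2$ identity of Edelsbrunner and Harer, namely $d_B(\mathcal{P}_i,\mathcal{P}_j)=d^\infty_B(P_i\cup\tau_i(P_j),P_j\cup\tau_j(P_i))$ (recalling that $\tau_i(P_j)$ is just $\tau(P_j)$ carrying the color label $i$), by showing that the extra projected points appearing in the larger sets are harmless. First I would rewrite the two augmented sets by splitting off the index $k=j$ on the left and $k=i$ on the right: the left-hand set becomes $\bigl(P_i\cup\tau_i(P_j)\bigr)\cup E_i$ and the right-hand set becomes $\bigl(P_j\cup\tau_j(P_i)\bigr)\cup E_j$, where $E_i=\bigcup_{k\in M\setminus\{i,j\}}\tau_i(P_k)$ and $E_j=\bigcup_{k\in M\setminus\{i,j\}}\tau_j(P_k)$. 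The key structural observation is that, as point sets on the line $Y=X$, both $E_i$ and $E_j$ are identical copies of $\bigcup_{k\notin\{i,j\}}\tau(P_k)$ (the coloring affects only later clustering, not the matching cost); in particular $|E_i|=|E_j|$ and there is a canonical bijection $\sigma\colon E_i\to E_j$ sending each projected point to the geometrically identical one. Thus the two augmented sets are obtained from the Edelsbrunner and Harer sets by appending the same multiset of diagonal points to each side, and it remains to prove that this does not change the bottleneck matching distance.

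For the easy inequality I would take an optimal bottleneck matching realizing $d_B(\mathcal{P}_i,\mathcal{P}_j)$ between $P_i\cup\tau_i(P_j)$ and $P_j\cup\tau_j(P_i)$ and extend it by matching each $e\in E_i$ to $\sigma(e)\in E_j$. Since both endpoints lie on $Y=X$, the cost convention makes every such edge cost $0$, so the extended matching is perfect and its bottleneck cost is unchanged, giving $d^\infty_B(\cdots)\le d_B(\mathcal{P}_i,\mathcal{P}_j)$.

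For the reverse inequality I would take an optimal bottleneck matching $\mu$ of the two augmented sets, of cost $c$, and rebuild from it a matching of the Edelsbrunner and Harer sets of cost at most $c$. The only nontrivial ingredient is the $L_\infty$ fact that the closest point of the diagonal to a feature point $v=(b,d)$ with $b<d$ is precisely its own projection $\tau(v)$, at distance $(d-b)/2$; hence $\|v-w\|_\infty\ge\|v-\tau(v)\|_\infty$ for every diagonal point $w$. I would then process $\mu$ as follows: keep every feature-to-feature pair; for each feature point matched by $\mu$ to some diagonal point $w$ (whether the kept projection or a point of $E_i$ or $E_j$), rematch it to its own projection, which lies in $\tau_i(P_j)$ or $\tau_j(P_i)$ and, by the fact above, is reached at cost $\le\|v-w\|_\infty\le c$; finally match the leftover points of $\tau_i(P_j)$ and $\tau_j(P_i)$ to each other arbitrarily at cost $0$. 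Treating each $\tau_i(P_j)$ (resp.\ $\tau_j(P_i)$) as a multiset indexed by the feature points, distinct feature points claim distinct slots, so no conflict arises, and a short count shows the numbers of leftover diagonal points on the two sides agree (both equal the number of feature-to-feature pairs of $\mu$). Hence this is a perfect matching of the Edelsbrunner and Harer sets of bottleneck cost at most $c$, so $d_B(\mathcal{P}_i,\mathcal{P}_j)\le d^\infty_B(\cdots)$. I expect this rebuilding step to be the main obstacle: one must pin down the $L_\infty$ nearest-diagonal-point fact exactly, argue that each feature point can retreat to its own projection slot without collision, and verify that the leftover counts balance. Combining the two inequalities with the Edelsbrunner and Harer identity then yields the stated equality.
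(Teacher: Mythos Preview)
Your proposal is correct and follows essentially the same route as the paper: both arguments split the augmented sets into the Edelsbrunner--Harer sets $P_i\cup\tau_i(P_j)$, $P_j\cup\tau_j(P_i)$ plus two geometrically identical multisets $E_i,E_j$ of diagonal points, and then argue that appending these identical diagonal points does not change the bottleneck matching cost. The paper dispatches that last step in one line by invoking the infinite-multiplicity convention, whereas you spell out both inequalities explicitly (extending by $\sigma$ for $\le$, and rebuilding via the $L_\infty$ nearest-diagonal-point fact for $\ge$); your version is more detailed but not a different idea.
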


\begin{proof}
%We only show the first equation, as the other two are similar.
First, notice that, in terms of sizes, we have
$$|P_i\bigcup\limits_{k\in M(-i)} \tau_i(P_k)| = |P_j\bigcup\limits_{k\in M(-j)} \tau_j(P_k)| = \sum\limits_{l=1}^{m}|P_l|.$$

 Following \cite{EH10}, we have
$$d_B({\cal P}_i,{\cal P}_j)=d^\infty_B(P_i\cup \tau_i(P_j),P_j\cup \tau_j(P_i)).$$
Note that 
$$(P_i\bigcup\limits_{k\in M(-i)} \tau_i(P_k))-(P_i\cup \tau_i(P_j))
= (\bigcup\limits_{k\in M(-i)} \tau_i(P_k))-\tau_i(P_j) $$
 and
$$(P_j\bigcup\limits_{k\in M(-j)} \tau_j(P_k))-(P_j\cup \tau_j(P_i))
= (\bigcup\limits_{k\in M(-j)} \tau_j(P_k))-\tau_j(P_i) $$
 are really two sets of identical points
with color-$i$ and color-$j$ on $Y=X$ respectively.
By the definition of infinite multiplicity property of a persistence diagram,
adding these (identical) points on $Y=X$ would not change the bottleneck matching distance between point sets
$P_i\cup \tau_i(P_j)$ and $P_j\cup \tau_j(P_i)$.
Consequently,
$$d^\infty_B(P_i\cup \tau_i(P_j),P_j\cup \tau_j(P_i))
=d^\infty_B(P_i\bigcup\limits_{k\in M(-i)} \tau_i(P_k),P_j\bigcup\limits_{k\in M(-j)} \tau_j(P_k)).$$
Therefore, we have
$$d_B({\cal P}_i,{\cal P}_j)
=d^\infty_B(P_i\bigcup\limits_{k\in M(-i)} \tau_i(P_k),P_j\bigcup\limits_{k\in M(-j)} \tau_j(P_k)).$$
\end{proof}

The implication of the above lemma is that the approximation algorithm in the
previous subsection can be used to compute the approximate center of 
$m$ persistence diagrams. The algorithm can be generalized by
simply projecting each point of color-$i$, say $p\in P_i$, on $Y=X$ to have
$m-1$ projection points with every color $k$, where $k\in M(-i)$.
Then we have $m$ augmented sets $P''_i$, $i=1,...,m$, of distinct colors,
but with the same size $\sum_{l=1..m}|P_l|$. Finally, we simply run Algorithm
1 over $\{P''_1,P''_2,...,P''_m\}$,
with the distance in the $L_\infty$ metric, to have a factor-2 approximation.
We leave out the details for the analysis as at this point all we need is the
triangle inequality of the $L_\infty$ metric.

\begin{theorem}
%\label{thm04}
There is a polynomial time factor-2 approximation for the Center Persistence Diagram problem under the bottleneck distance with $m$ input diagrams
for all the three versions (i.e., `Without Replacement', `With Replacement' and continuous versions).
\end{theorem}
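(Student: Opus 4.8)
The plan is to reduce the Center Persistence Diagram problem to the equal-size $m$-Bottleneck Matching problem already solved by Algorithm~1, using the projection device of the preceding lemma. For each $i\in M=\{1,\dots,m\}$ I would form the augmented set
$$P''_i=P_i\cup\bigcup_{k\in M(-i)}\tau_i(P_k),$$
in which every projected point carries color-$i$. The size computation in the preceding lemma shows that all the sets $P''_1,\dots,P''_m$ share the common size $N=\sum_{l=1}^{m}|P_l|$, so $\{P''_1,\dots,P''_m\}$ is a legitimate instance of $m$-Bottleneck Matching. Moreover, that lemma gives $d_B({\cal P}_i,{\cal P}_j)=d^\infty_B(P''_i,P''_j)$ for every pair, so the $L_\infty$ distances among the augmented sets faithfully encode the bottleneck distances among the diagrams.

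Next I would run Algorithm~1 on $\{P''_1,\dots,P''_m\}$ under the $L_\infty$ metric. The proof of Theorem~\ref{thm09} uses only the triangle inequality of the underlying metric, so its factor-$2$ guarantee transfers without change: writing $\OPT$ for the optimal matching radius of the augmented instance, Algorithm~1 produces a center multiset $C$ of size $N$ with $\max_i d^\infty_B(C,P''_i)\le 2\,\OPT$. The same remark applies to the `With Replacement' and continuous variants, since parts~(2) and~(3) of that proof again invoke only the triangle-inequality bound and are indifferent to the ambient metric.

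It remains to read a center diagram ${\cal Q}$ off of $C$ and to reconcile the two notions of optimum. I would let ${\cal Q}$ consist of the off-diagonal points of $C$ together with the line $Y=X$, discarding the points of $C$ that lie on $Y=X$, which the infinite multiplicity of the diagonal renders irrelevant. Applying the projection lemma to the pair $({\cal Q},{\cal P}_i)$ and using that adjoining or deleting identical diagonal points never alters a bottleneck matching distance, one gets $d_B({\cal Q},{\cal P}_i)=d^\infty_B(C,P''_i)$ for each $i$, whence $\max_i d_B({\cal Q},{\cal P}_i)\le 2\,\OPT$. The step needing the most care --- and the main obstacle --- is verifying that $\OPT$ for the augmented equal-size matching instance does not exceed the optimal Center Persistence Diagram radius: any optimal diagram ${\cal Q}^*$ must, via the same projection scheme, induce a matching of the augmented sets of no larger radius, so that $2\,\OPT$ is genuinely at most twice the Center Persistence Diagram optimum. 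Both this inequality and the conversion above rest solely on the infinite-multiplicity property of $Y=X$, which lets the projected diagonal points be inserted or removed freely without disturbing any bottleneck distance. Combining the two yields the claimed factor-$2$ approximation for all three versions.
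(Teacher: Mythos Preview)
Your proposal is correct and follows the same route as the paper: augment each $P_i$ to $P''_i$ via diagonal projections so that all $m$ sets have common size $N=\sum_l |P_l|$, run Algorithm~1 under $L_\infty$, and invoke the triangle inequality. The paper's proof is extremely terse (it just says the analysis is identical to Theorem~\ref{thm09} and records the running time $O((mn)^{1.5}\log(mn))$), so your write-up actually supplies more detail than the paper does.

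One simplification worth noting: the step you single out as ``the main obstacle'' --- showing that the augmented-instance optimum $\OPT$ does not exceed the CPD optimum --- can be sidestepped entirely. Algorithm~1 always returns $C=P''_1$ itself as the center set, so after discarding diagonal points your output diagram ${\cal Q}$ is exactly ${\cal P}_1$. The preceding lemma then gives $\APP=\max_i d^\infty_B(P''_1,P''_i)=\max_i d_B({\cal P}_1,{\cal P}_i)$, and the triangle inequality for the bottleneck distance $d_B$ applied to the optimal diagram ${\cal Q}^*$ yields $d_B({\cal P}_1,{\cal P}_i)\le d_B({\cal P}_1,{\cal Q}^*)+d_B({\cal Q}^*,{\cal P}_i)\le 2\,\OPT_{\mathrm{CPD}}$ directly, for all three versions. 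This is what the paper means by ``all we need is the triangle inequality of the $L_\infty$ metric,'' and it avoids having to construct a feasible augmented-matching solution out of ${\cal Q}^*$.
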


\begin{proof}
The analysis of the approximation factor is identical with Theorem 10.
However, when $m$ is part of the input, each of the augmented point set
$P''_i$, $i=1...,m$, has a size $\sum_{l=1..m}|P_l|=O(mn)$. Therefore,
the running time of the algorithm increases to $O((mn)^{1.5}\log (mn))$,
which is, nonetheless, still polynomial.
\end{proof}

It is interesting to raise the question whether these results still hold if
the $p$-Wasserstein distance is used, which we depict in the next section.
As we will see there, different from under the bottleneck distance,
a lot of questions still remain open.

\section{Center Persistence Diagram under the Wasserstein Distance}

\begin{definition}\textbf{\emph{The Center Persistence Diagram Problem under the $p$-Wasserstein Distance (CPD-W)}}

{\bf Instance}: A set of $m$ persistence diagrams ${\cal P}_1,...,{\cal P}_m$
with the corresponding feature point sets $P_1,...,P_m$ respectively, and a
real value $r$.

{\bf Question}: Is there a persistence diagram ${\cal Q}$ such that $\max_i W_p({\cal Q},{\cal P}_i)\leq r$?
\end{definition}

Note that, similar to CPD-B, we could have three versions depending on ${\cal Q}$: (1) selected
with no replacement from the multiset $\cup_{i=1..m} P_i$, (2)
selected with replacement from the set $\cup_{i=1..m} P_i$, and (3) arbitrarily
selected. We call the first two versions {\em discrete} and the third case
{\em continuous}. Here we will deal with the continuous case as it is still
unknown how to deal with the discrete cases yet.

Given two planar point sets $P$ and $Q$ with size $n$, they naturally form a
complete bipartite graph $\langle P,Q\rangle$. Let $c(P,Q)$ be the {\em weight}
or {\em total cost} of the minimum weight matching in the bipartite graph
$\langle P,Q\rangle$, where the weight or cost of an edge $(p,q)$ is
defines as $c(p,q)=\|p-q\|_2$, for $p\in P,q\in Q$.

\begin{definition}\textbf{\emph{The $m$-BottleneckSum Matching Problem}}

{\bf Instance}: A set of $m$ planar point sets $P_1,...,P_m$ such that
$|P_1|=\cdots=|P_m|=n$, and a real value $r$.

{\bf Question}: Is there a point set $Q$, with $|Q|=n$, such that any
$q\in Q$ is arbitrarily selected and $\max_i c(Q,P_i)\leq r$?
\end{definition}

The `With Replacement' and `With No Replacement' discrete cases can be defined
similarly as in Section 2. But we only focus on this continuous version here.

\subsection{3-BottleneckSum Matching is NP-hard}

In this subsection, we first prove that 3-BottleneckSum Matching is NP-hard.
The crux to modify the proof in Theorem 6 is that the objective function
is to minimize the maximum summation of distances, therefore in Figure 6 around the triple gadget
$T$ the sum of distances from $Q$ to $P_i$'s might be different.
In fact, with the three clusters $(a,b,w)$, $(c,d,u)$ and $(e,f,v)$ alone,
the sum of distances from the corresponding centers, i.e., $b$, $c$ and $e$,
to the points in different colors in these clusters would already be all
different.

\begin{figure}[htbp]
%\psfrag{x1}{$x_1$}
%\psfrag{x2}{$x_2$}
%\psfrag{x3}{$x_3$}
\begin{center}
\includegraphics[scale=0.65]{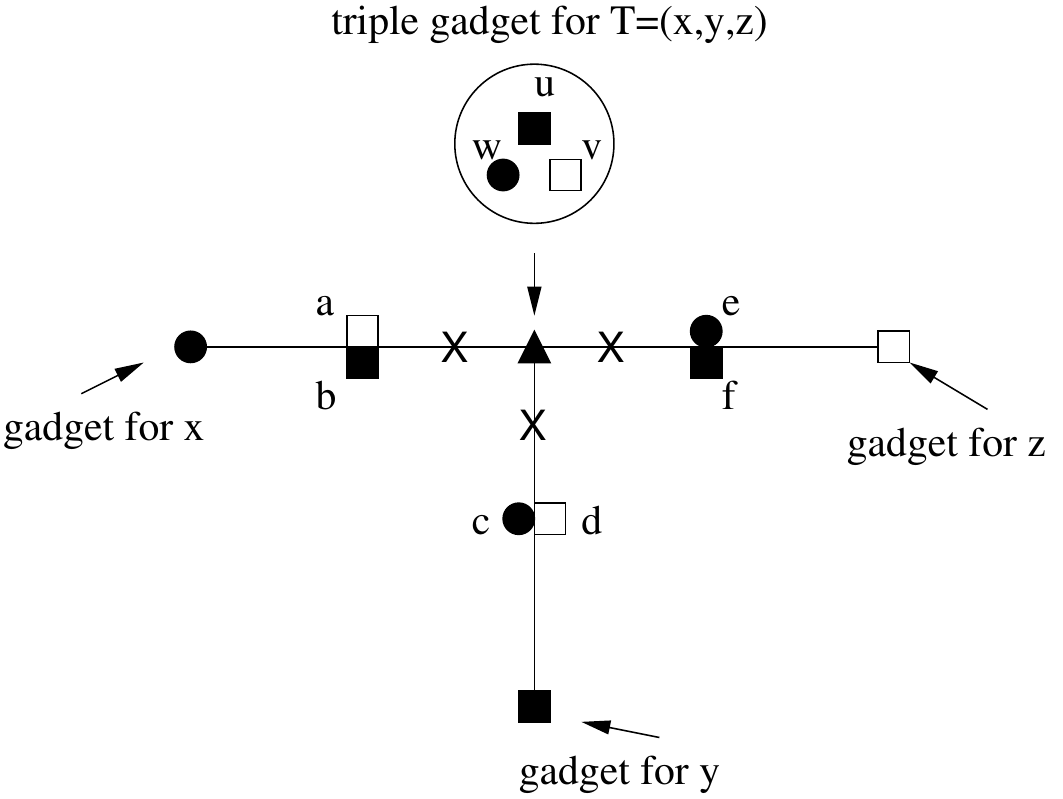}
\end{center}
\label{fig4}
\caption{\bf A skeleton of the reduction from Planar 3DM to 3-BottleneckSum
Matching (and also to the continuous Center Persistence Diagram problem with
$m=3$ diagrams). The triple gadget for $T=\langle x,y,x\rangle$,
represented as $\blacktriangle$, is really putting three element points
of distinct colors on a grid point.}
\end{figure}

The reduction is still from Planar 3DM. The main change is, at each grid edge
with endpoints in color-$i$, we put two points with different colors at the
midpoint of that edge. See Figure 7. Then, the optimal (continuous) center
for a cluster of three points in different colors would be the midpoint of
these three points with different colors, i.e., with minimum radius 1/4. In
Figure 7, the centers marked as `X' are for the clusters $(a,b,w)$, $(c,d,u)$
and $(e,f,v)$ respectively. Note that each of the centers has the same distance
to the three points in distinct colors in the corresponding cluster.
We thus have the following theorem.

\begin{theorem}
The continuous version of 3-BottleneckSum Matching is NP-hard.
\end{theorem}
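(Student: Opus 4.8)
The plan is to reduce from Planar 3DM, reusing the planar rectilinear embedding and the gadget skeleton of the bottleneck reduction (Theorem 6) but replacing the point placement inside the gadgets so that the matching sum $c(Q,P_i)$ can be controlled. Recall that for the bottleneck objective each grid edge carried color-$2$ and color-$3$ points at its $1/3$ and $2/3$ positions, a placement tuned to force the per-cluster radius to be either $1/3$ or $2/3$. For the sum objective I would instead place, at the midpoint of each grid edge whose endpoints are color-$i$, two points of the remaining two colors (Figure 7). The point of this change is that the natural cluster formed along such an edge consists of one color-$i$ point at an endpoint (position $0$) together with the two other-colored points at the midpoint (position $1/2$), and its optimal continuous center sits at $1/4$, at distance exactly $1/4$ from each of the three points; hence this cluster contributes $1/4$ to each of the three color-sums $c(Q,P_1),c(Q,P_2),c(Q,P_3)$ simultaneously.

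First I would keep the combinatorial skeleton untouched, so that Lemma 5 and Lemma 6 still apply verbatim: in each element gadget exactly one node is pulled out, and at each triple gadget $T=\langle x,y,z\rangle$ either all three points are pulled out (one by each incident element gadget) or none is, in which case the three points are covered by a single center of radius $0$. As before, $T$ being selected in the 3DM solution corresponds to the radius-$0$ covering (cost $0$), and $T$ being unselected corresponds to its three points being pulled out into neighboring midpoint clusters.

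Next I would carry out the cost accounting. Because the midpoint placement makes every edge cluster contribute $1/4$ to all three color-sums, a clustering that realizes $k$ radius-$0$ triples and $E$ edge clusters satisfies $c(Q,P_1)=c(Q,P_2)=c(Q,P_3)=E/4$, so $\max_i c(Q,P_i)=\frac{1}{3}\sum_i c(Q,P_i)=E/4$. The gadget structure forces the radius-$0$ triples to form a partial 3DM matching, so their number is at most $\gamma$, with equality precisely when the 3DM instance admits an exact cover; and the number $E$ of positive-cost clusters decreases monotonically as the number of selected triples increases. Setting $r=E_{\min}/4$, where $E_{\min}$ is the edge-cluster count attained by an exact cover, the yes-direction exhibits a balanced clustering with $\max_i c(Q,P_i)=r$, while in the no-direction every valid clustering realizes strictly fewer than $\gamma$ radius-$0$ triples, hence $\sum_i c(Q,P_i)>3r$ and therefore $\max_i c(Q,P_i)\geq\frac{1}{3}\sum_i c(Q,P_i)>r$. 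Translating this $3$-BottleneckSum instance into three persistence diagrams by rigidly shifting all colored points far away from $Y=X$ (the same device used to derive the CPD-B hardness) then yields the continuous CPD-W hardness as well.

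The hard part will be this balancing-and-separation step, which is exactly where the sum objective departs from the bottleneck one. Under the max-radius objective the clusters are decoupled --- it suffices that each valid cluster has radius $1/3$ and each invalid one radius $2/3$ --- whereas the sum couples every cluster through the global matching cost, and, as the paper notes around Figure 7, a careless placement makes the three color-sums $c(Q,P_i)$ (for instance the contributions of the centers $b,c,e$ to the clusters $(a,b,w),(c,d,u),(e,f,v)$) unequal, so that a threshold on $\max_i c(Q,P_i)$ no longer separates the two cases cleanly. I must therefore verify that the midpoint construction really equalizes the three color-contributions cluster by cluster, that the optimal matching $c(Q,P_i)$ indeed decomposes along the intended local clusters (which the geometric separation of distinct gadget regions guarantees), and that the cost gap between a selected and an unselected triple is a fixed positive constant, so that the yes/no threshold $r$ is clean. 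Once this accounting is pinned down, the two-direction correctness argument follows the same pattern as the proof of Theorem 6.
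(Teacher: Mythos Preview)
Your proposal is essentially the paper's own proof: the same reduction from Planar 3DM, the same key modification of placing the two off-color points at the midpoint of each grid edge so that the continuous center at position $1/4$ is equidistant (distance $1/4$) from all three colors, and the same threshold $r=(|P_1|-\gamma)/4$ (your $E_{\min}/4$). Your averaging argument $\max_i c(Q,P_i)\ge\tfrac{1}{3}\sum_i c(Q,P_i)$ for the no-direction, together with your explicit identification of the per-cluster balancing as the crux, in fact supplies more detail than the paper, which simply asserts the claim and defers to the analogy with Theorem~7.
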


\begin{proof}
The reduction is similar to that in Theorem 7 and can be done in $O(n^2)$ time.
Let $P_i$ be the set of points used in color-$i$ in the construction, for
$i=1..3$ and with $|P_1|=|P_2|=|P_3|$. Then we could have exactly $|P_1|$
clusters. We claim that $P_1,P_2$ and $P_3$ admit
a point set $Q$ incurring a 3-BottleneckSum matching with a cost of
$(|P_1|-\gamma)/4$ if and only if the Planar 3DM instance has a YES solution. We
leave out the argument details as they are almost identical to those in
Theorem 7.
\end{proof}

We show in the next subsection how Theorem 16 can be extended to the continuous
Center Persistence Diagram problem under the $p$-Wasserstein distance.

\subsection{Continuous Center Persistence Diagram under Wasserstein Distance is NP-hard}

\begin{corollary}
The (continuous) Center Persistence Diagram problem under $p$-Wasserstein distance with $m\geq 3$ input persistence diagrams is NP-hard.
\end{corollary}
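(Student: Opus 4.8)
The plan is to lift the NP-hardness of continuous 3-BottleneckSum Matching (Theorem 16) to the continuous Center Persistence Diagram problem under the $p$-Wasserstein distance, following the same rigid-translation strategy used earlier to derive the CPD-B inapproximability from 3-Bottleneck Matching. The starting point is the three planar point sets $P_1,P_2,P_3$ (all of the same size) produced by the reduction in Theorem 16, together with the cluster centers marked `X'. The key structural feature I would exploit is the equidistance observed there: each continuous center $\hat q$ sits at a common distance $\rho$ from the three differently-colored points $q_1,q_2,q_3$ of its cluster. Consequently, for every cluster and every $i$, the contribution of $\hat q$ to $W_p(\mathcal{Q},\mathcal{P}_i)^p$ is exactly $\rho^p$, so that $\max_i W_p(\mathcal{Q},\mathcal{P}_i)^p=\sum_{\text{clusters}}\rho^p$, a quantity identical across $i=1,2,3$. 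Since $t\mapsto t^{p}$ is strictly increasing on $t>0$, minimizing this sum selects the same clustering as minimizing $\sum_{\text{clusters}}\rho$ (the $1$-BottleneckSum cost), and the YES/NO gap of Theorem 16 (optimal radius $1/4$ versus a strictly larger radius) is preserved for every $p\geq 1$. Thus the $p$-Wasserstein objective and the BottleneckSum objective agree on these instances.

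Next I would turn the point sets into three persistence diagrams exactly as in the CPD-B construction: letting $\hat D$ be the diameter of $P_1\cup P_2\cup P_3$, translate the whole configuration rigidly so that every feature point lies far above the line $Y=X$. For the Wasserstein distance the necessary separation is larger than in the bottleneck case, because the objective sums over $n$ matched pairs rather than taking a maximum; I would therefore push the points to height at least $c\,n\,\hat D$ above $Y=X$ for a suitable constant $c$. The purpose is again to neutralize the infinite multiplicity of $Y=X$: in the projection-based computation of $W_p$ described in Subsection 2.1, matching any real feature point to a diagonal projection incurs a cost on the order of the separation, which by the choice of $c$ exceeds the total cost of any matching that stays among the translated real points. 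Hence no optimal center diagram ever uses a diagonal match, an optimal center's feature points must themselves lie far from $Y=X$, and the continuous Center Persistence Diagram problem under $p$-Wasserstein on these three diagrams coincides with continuous 3-BottleneckSum Matching on $P_1,P_2,P_3$.

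The main obstacle I anticipate is precisely this quantitative separation step: I must verify that the chosen translation height is large enough that, summed over all $n$ points, even a single diagonal match can never be cheaper than rerouting the matching to remain among the real points. This amounts to bounding the total real-to-real matching cost (at most $n$ times the post-translation diameter, which remains $O(\hat D)$ since translation preserves mutual distances) against the cost of one diagonal match (on the order of the height above $Y=X$), and checking that a linear-in-$n$ blow-up of the separation suffices. Once this is established, the reduction is polynomial, since the construction of Theorem 16 runs in $O(n^2)$ time and the translation is trivial. Combining it with Theorem 16 yields NP-hardness of the continuous Center Persistence Diagram problem under the $p$-Wasserstein distance for $m=3$; the case $m\geq 3$ follows by padding the instance with duplicate diagrams, whose presence does not change the value of $\max_i W_p(\mathcal{Q},\mathcal{P}_i)$.
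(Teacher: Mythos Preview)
Your plan is essentially the paper's own: reuse the Theorem~16 construction, rigidly translate the three colored point sets far from $Y=X$ so that diagonal matches can never participate in an optimal center, and then read off the $p$-Wasserstein threshold $(|P_1|-\gamma)^{1/p}/4$ from the cluster structure. Two points of comparison are worth recording. First, the paper simply reuses the $2\hat D$ translation from the bottleneck case, whereas you push to height $\Theta(n\hat D)$; your caution is appropriate for a summed objective, since a single diagonal match of cost $2\hat D$ need not by itself exceed the threshold, and the paper does not justify why $2\hat D$ suffices here. Second, you explicitly cover $m>3$ by duplicating diagrams, which the paper omits. One step in your write-up is misstated, however: monotonicity of $t\mapsto t^p$ does \emph{not} imply that the minimizers of $\sum_{\text{clusters}}\rho^p$ and of $\sum_{\text{clusters}}\rho$ coincide in general. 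What makes the two objectives agree on these instances is the specific feature of the construction that every admissible cluster has radius either $0$ (at a selected triple gadget) or exactly $1/4$, so both sums are monotone functions of the same integer count; you should state this directly rather than appeal to monotonicity of $t^p$.
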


\begin{proof}
Our reduction is exactly the same as in Theorem 16.
We then move the constructed points at least $2\hat{D}$ distance away from
$Y=X$ as in Corollary 9, where $\hat{D}$ is the diameter of all the constructed
points on the rectilinear grid.

Let $P_i$ be the set of points used in color-$i$ in the construction, for
$i=1..3$ and with $|P_1|=|P_2|=|P_3|$. As all our $|P_1|$ clusters form either
a horizontal or vertical interval with length 1/2, the $L_\infty$ distance
would be the same as under $L_2$, i.e., each cluster would contribute a value
$(1/4)^p$ toward computing the $p$-Wasserstein distance between ${\cal Q}$ and
${\cal P}_i$ --- using the midpoint of the interval as the corresponding center.
Therefore, we claim that $P_1,P_2$ and $P_3$ admit a center persistence diagram
${\cal Q}$ with a maximum $p$-Wasserstein distance
$\frac{(|P_1|-\gamma)^{1/p}}{4}$ to all $P_i$'s if and only if the Planar 3DM
instance has a YES solution. We again leave out the arguments. This concludes
the proof.
\end{proof}

Note that, for $p<\infty$, the $2-\varepsilon$ inapproximability bound does
not hold anymore as in Corollaries 8 and 9. Moreover, the proof of Corollary 16
does not hold for the discrete versions of CPD-W. Further research is needed
along this line. On the other hand, we comment that the 2-approximation
algorithm still works as the $p$-Wasserstein distance fulfills the triangle
inequality.
 
\section{Concluding Remarks}

In this paper, we study systematically the Center Persistence Diagram
problem under both the bottleneck and $p$-Wasserstein distances. Under
the bottleneck distance, the results are tight as we have a $2-\varepsilon$
inapproximability lower bound and a 2-approximation algorithm (in fact, for
all the three versions). Under the $p$-Wasserstein distance, unfortunately,
we only have the NP-hardness for the continuous version and a 2-approximation,
how to reduce the gap poses an interesting open problem. In fact, a similar
question of obtaining some APX-hardness result was posed in \cite{CKW15}
already, although the (min-sum) objective function there is slightly different.
For the discrete cases under the $p$-Wasserstein distance, it is not even
known whether the problems are NP-hard.
%We comment that in this continuous case, points in ${\cal C}$
%have no relation with any topological feature, so topologically this version
%might not be quite interesting.

\end{document}